\newtheorem{theorem}{Theorem}
\newtheorem{definition}{Definition}
\newcommand{\1}{\mathds{1}}
\DeclareMathOperator{\In}{In}
\DeclareMathOperator{\Out}{Out}
\newcommand{\ramuno}{\mathrm{i}}
\newcommand{\eulere}{\mathrm{e}}
\newcommand{\ket}[1]{| #1 \rangle}
\newcommand{\bra}[1]{\langle #1 |}
\newcommand{\braket}[2]{\langle #1 | #2 \rangle}
\begin{document}
\title{Quantum Router with Network Coding}
\author{Michael \surname{Epping}}
\email{epping@hhu.de}
\author{Hermann \surname{Kampermann}}
\author{Dagmar \surname{Bru\ss{}}}
\affiliation{Institut f\"ur Theoretische Physik, Heinrich-Heine-Universit\"at D\"usseldorf, Germany}
\begin{abstract}
Many protocols of quantum information processing, like quantum key distribution or measurement-based quantum computation, "consume" entangled quantum states during their execution. When participants are located at distant sites, these resource states need to be distributed. Due to transmission losses quantum repeater become necessary for large distances (e.g. $\gtrsim 300\text{ km}$).\\
Here we generalize the concept of the graph state repeater to $D$-dimensional graph states and to repeaters that can perform basic measurement-based quantum computations, which we call quantum routers. This processing of data at intermediate network nodes is called quantum network coding. We describe how a scheme to distribute general two-colorable graph states via quantum routers with network coding can be constructed from classical linear network codes. The robustness of the distribution of graph states against outages of network nodes is analysed by establishing a link to stabilizer error correction codes. Furthermore we show, that for any stabilizer error correction code there exists a corresponding quantum network code with similar error correcting capabilities.

\end{abstract}
\maketitle
\section{Introduction}
Long distance quantum communication suffers from transmission losses. Hence intermediate devices that recover the original signal, so-called quantum repeaters, are necessary~\cite{Briegel98,Duer99}. Many proposals for them have been made, including approaches based on repeat-until-success strategies using two-way communication~\cite{Duan01,vanLoock06,Zwerger12} and forward-error correction based protocols which do not require this acknowledgement of successful transmission~\cite{Knill96,Jiang09,Fowler10,Muralidharan14}. The requirements of quantum repeaters regarding the precision of operations are very challenging, but experiments have shown significant progress \cite{Cory98,Yuan08,Kimble08,Sangouard11,Schindler11,Clausen12,Hensen15}.\\
In \cite{Epping16} we described a quantum repeater scheme based on quantum graph states~\cite{Schlingemann01,Hein04}, which naturally generalizes to networks of such devices. Such a network consists of several parties connected by repeater lines according to a mathematical graph. The aim is to provide the parties with a resource of entangled quantum states, which can be used in several protocols of quantum information theory, e.g. key distribution~\cite{BB84,Chen04,Duer05}, Bell experiments~\cite{Mermin1990}, teleportation~\cite{BBC93}, distributed quantum computing~\cite{Buhrman03}, and secret sharing~\cite{Markham08,Keet10}.\\
A recently introduced research field studies network codes, i.e. networks in which the intermediate nodes are able to process the data packets instead of merely routing them~\cite{Ahlswede00,Yeung08,Ho08}. Network coding can improve the throughput as well as increase the robustness of the communication scheme against outages of nodes. Analogous studies have been performed in the quantum case, i.e. regarding quantum network codes (QNC) \cite{Leung06,Hayashi07,Kobayashi09,Kobayashi10,Beaudrap14}. Again one can distinguish them by the way classical communication is treated.\\
Here we consider network coding in the scenario of free classical communication, which is reasonable given that it is usually easier to implement and well-developed in practice. While we described the error correction on the physical layer (i.e. single qubits) in \cite{Epping15,Epping16}, we now focus on analysing quantum error correction on the higher level of the network, where errors correspond to failures of whole nodes. As the error model we use the depolarizing channel, i.e. lost qudits are replaced by the completely mixed state, which corresponds to some probability for discrete $X$ and $Z$ errors on the qudit of the lost node~\cite{nielsen00}.\\
In this paper we describe how network coding, implemented here by simple measurement-based quantum computation~\cite{Raussendorf03}, helps to distribute multipartite entangled qudit graph states amongst distant nodes of a network. This task is a general prerequisite for many quantum information protocols, see above.\\
We start with a short introduction to qudit graph states in Section~\ref{sec:graphstates}. In the same section we summarize the main idea of the graph state repeater~\cite{Epping16} and generalize the concept from qubits to qudits (i.e. $D$-dimensional quantum systems with $D\geq 2$). In Section~\ref{sec:networkcoding} we apply network coding to the repeater network starting from the examples of the butterfly network and linear network codes and describe different codes that lead to the same distributed graph state. The main result of this paper, the analysis of the robustness of the quantum network coding scheme in terms of stabilizer error correction codes~\cite{Gottesman97,LidarBrunQEC13}, is developed in Section~\ref{sec:robustness}. We end with an outlook and conclusions in Section~\ref{sec:conclusion}.

\section{Graph states and quantum networks}\label{sec:graphstates}
In this work we make extensive use of the stabilizer formalism~\cite{Gottesman97,nielsen00}. Instead of directly writing down the quantum state $\ket{\psi}$ of a system it can be more convenient to keep track of a set $S$ of commuting operators, which uniquely define $\ket{\psi}$ via the eigenequations
\begin{equation}
s\ket{\psi}=\ket{\psi}\hspace{1cm}\forall s\in S. \label{eq:eigenequation}
\end{equation}
We say $s$ stabilizes $\ket{\psi}$. Please note that products of stabilizer operators fulfil Eq.~(\ref{eq:eigenequation}), too.\\
Consider a measurement of the observable $\mathcal{A}$ with outcome $\alpha$ and let $P_{\alpha}$ denote the projector onto the eigenspace of $\mathcal{A}$ associated with the eigenvalue $\alpha$ by $P_{\alpha}$. One may update $S$ corresponding to the new knowledge by replacing all $s\in S$ that commute with $P_{\alpha}$ by $P_{\alpha} s$. All stabilizer operators that do not commute with $P_{\alpha}$ are removed from $S$ after the measurement. Let the post-measurement state $\ket{\psi'}$ be defined by a minimal set of operators that commute with $P_{\alpha}$ (for all $\alpha$). We call these operators the main stabilizers. Identifying them suffices to understand the measurement dynamics.\\
We will be mostly interested in the non-Fd part $B$ of a system composed of $A$ and $B$ after measuring the observable $\mathcal{A}$ on $A$ with outcome $\alpha$. We therefore trace out system $A$ after the measurement. If a stabilizer operator $s$ has the form $\mathcal{A}^x\otimes \mathcal{B}$, with integer $x$, before the measurement of $\mathcal{A}$, then we simply replace it by $\alpha^x\mathcal{B}$ (acting on $B$ only) after the measurement.\\
The stabilizer formalism is particularly handy in the context of graph states~\cite{Schlingemann01,Hein04,Hostens04}. A qudit graph state $\ket{G}$ associated with a mathematical graph $G$ is a particular quantum state of the composite system of several $D$-dimensional quantum systems (the vertices), see e.g.~\cite{Hostens04}. We denote the corresponding graph by $G=(V,E)$, where $V$ is the set of vertices and $E$ is the set of edges. Each edge $(a,b)\in E$ has an integer-valued weight, which can be found at the position ($a$, $b$) in the adjacency matrix $\Gamma$, i.e. the weight is $\Gamma_{ab}$. The graph state $\ket{G}$ is defined to be the unique quantum state stabilized by the so-called stabilizer generators
\begin{align}
g_v=& X_v \prod_{w\in V} Z_w^{\Gamma_{vw}}\hspace{1cm} \forall v\in V,\label{eq:graphgenerators}\\
 \text{where }X=&\sum_{j=0}^{D-1} \ket{j+1\bmod D}\bra{j}\\
\text{and }
 Z=&\sum_{j=0}^{D-1} \eulere^{j \frac{2 \pi \ramuno}{D}} \ket{j}\bra{j}.
\end{align}
Because $X^D=Z^D=\1$ many calculations in the remainder will be done modulus $D$, i.e. in the finite field $\mathds{F}_D$. For example $Z^j=Z^{j\bmod D}$ and $X^{-1}=X^{D-1}$.
And more explicitly, 
\begin{equation}
\ket{G}=\prod_{\substack{a,b\in V\\a>b}} (C^{(a,b)}_Z)^{\Gamma_{ab}} \ket{+}^{\otimes |V|},
\end{equation}
where
\begin{equation}
\ket{+}=\frac{1}{\sqrt{D}} \sum_{j=0}^{D-1} \ket{j}
\end{equation}
and the controlled-Phase gate acting on qudit $a$ and $b$ is
\begin{equation}
C_Z^{(a,b)} = \sum_{j=0}^{D-1} \ket{j}\bra{j}_a \otimes Z_b^j \otimes \1_{V\backslash\{a,b\}}.
\end{equation}
$X$ and $Z$ are not Hermitian (except for $D=2$) but normal and we will use them as observables, i.e. we measure in the eigenbasis of the operator, projecting onto the eigenspace associated with an eigenvalue. We assign this eigenvalue to the measurement outcome. The discrete Fourier transform matrix
\begin{equation}
 H=\frac{1}{\sqrt{D}} \sum_{xy} \eulere^{xy \frac{2\pi \ramuno}{D}} \ket{x}\bra{y}
\end{equation}
performs the basis change $H X H^\dagger=Z$ and $H Z H^\dagger=X^{-1}$.

The above considerations regarding measurements in the stabilizer formalism help to understand the idea of the graph state repeater of \cite{Epping16}. 

In the present context a quantum network consists of several (many) separated sites, which are connected by quantum channels. We keep the physical level with the error correction code against transmission losses hidden from our abstraction, because it will not be important in the present paper. See \cite{Epping16,Epping15} for the omitted details. We may therefore assume that the quantum channels are perfect.\\

The task is to produce a graph state shared by a subset of these sites, which we call ``parties''. If the other sites have vertex degree two, then they are called quantum repeaters, else they are called quantum routers. The latter term is used to distinguish them from the standard concept of quantum repeaters in the literature. Repeaters and routers will be depicted by squares, while parties are shown as disks. The quantum circuits of a quantum repeater and a quantum router are shown in FIG.~\ref{fig:circuits}. \begin{figure}[htbp] %
\subfigure[A quantum repeater.]{\includegraphics[scale=0.4]{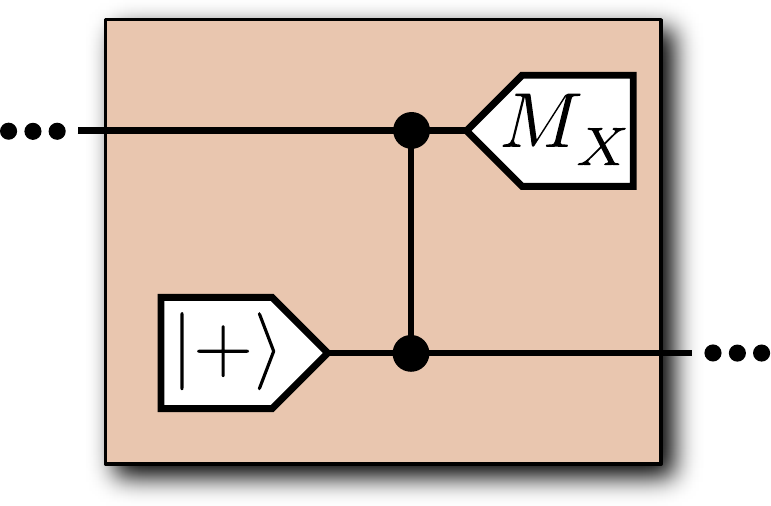}}\hspace{0.5cm} %
\subfigure[A quantum router with vertex degree five.]{\includegraphics[scale=0.4]{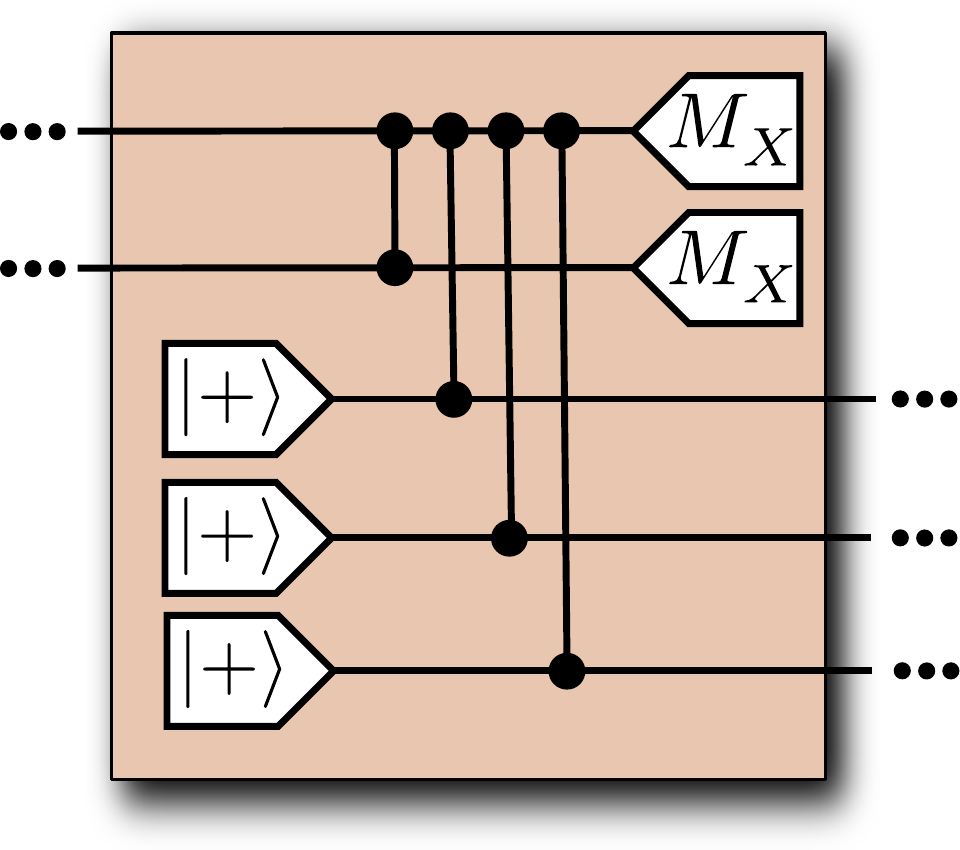}} %
\caption{The circuits of a repeater (vertex degree two) and a router (vertex degree greater than two), see also~\cite{Epping16,Epping15}. The shown state preparation, controlled-Phase gate and measurement are meant to be logical, i.e. the physical level is hidden from abstraction.}\label{fig:circuits} %
\end{figure} %
A quantum network allows to generate a graph state like the one shown in FIG.~\ref{fig:repeaternetwork}. \begin{figure}[tbp] %
\subfigure[The pre-measurement graph state.]{\includegraphics[scale=0.9]{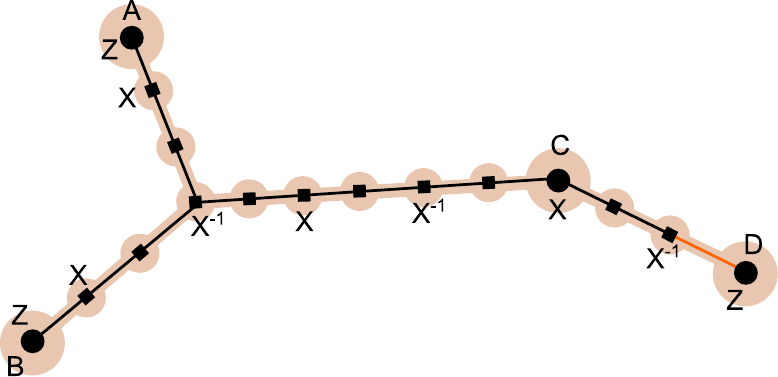}\label{fig:repeaternetworka}}\\
\subfigure[The post-measurement graph state.]{\includegraphics[scale=0.9]{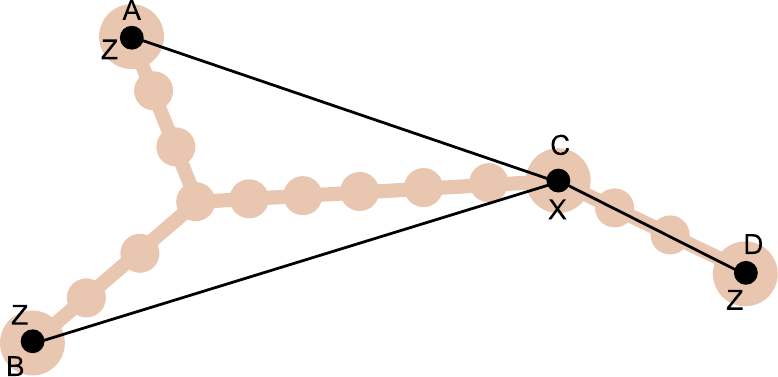}\label{fig:repeaternetworkb}}
\caption{A network of repeater stations (squares) connects the parties $V=\{A,B,C,D\}$ (disks). The orange (light grey) edge has weight $-1$ (s.t. no basis change is necessary for the final state), black edges have weight one unless explicitly labelled otherwise. All repeaters and routers (squares) measure their qudit in $X$-basis. The main stabilizer centred on $C$ shown in (a) defines the neighbourhood of $C$ in the post-measurement state (b) shared by the parties A, B, C, and D. The brown background indicates the network infrastructure, i.e. the sites and fibres.}\label{fig:repeaternetwork}%
\end{figure} %
The transmission directions on the channels to create the edges can always be assigned, s.t. the network is acyclic and the nodes can be brought into a time-order, where each site receives all qudits before it sends the outgoing qudits. An edge of weight $\gamma\in\mathds{Z}$ between sites $A$ and $B$ is created as follows. Let $a$ be the qudit of site $A$, which might have been received from a previous site. A qudit $b$ is prepared at $A$ in the $\ket{+}$ state. Then $(C_Z^{(a,b)})^{\gamma}$ is applied to $a$ and $b$. Finally the qudit $b$ is sent to the site $B$.

Let a network corresponding to the graph $G=(V,E)$ be given, i.e. parties $V$ connected by repeater lines $E$. Suppose the graph state $\ket{G}$ is to be distributed. This can be achieved by producing the graph state with a qudit at each repeater station as shown in FIG.~\ref{fig:repeaternetwork}.

In our protocol all repeaters and routers measure their qudits in the $X$-basis. The post-measurement state can be found by looking at the main stabilizer operators, which are constructed from chains of $X$-operators obtained by multiplying powers of the generators of Eq.~(\ref{eq:graphgenerators}), where the exponents are chosen such that the $Z$-operators cancel out each other~\cite{Wu15}. Note that the weight of the edge enters as a multiplicative factor in the exponent of the $Z$-operators, see Eq.~(\ref{eq:graphgenerators}). As described above one can measure the qudits of the repeater stations in $X$-basis and replace the operators in the main stabilizer by the corresponding measurement outcomes. One obtains, up to by-product operators, the stabilizer generators of the desired graph state. This way one easily sees which graph state is produced.\\

\section{Network coding}\label{sec:networkcoding}
In \cite{Epping16} all repeater stations had vertex degree two. Now we consider repeaters (``routers'') with higher vertex degree. This leads to quantum network coding (QNC), i.e. the intermediate sites now perform a special type of (measurement-based) quantum computation instead of merely passing on the data. This can be advantageous in terms of the possible throughput (distributed states per network use) and the robustness against failures of nodes. A simple and famous example for the first advantage is the butterfly network~\cite{Ahlswede00,Leung06}, see FIG.~\ref{fig:qbutterfly}. \begin{figure}[htbp]%
\subfigure[Pre-measurement state.]{\hspace{0.5cm}\includegraphics{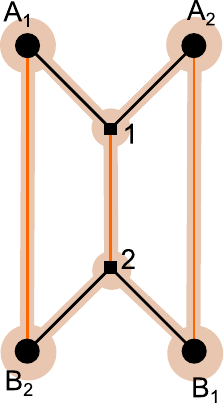}\hspace{0.5cm}} %
\subfigure[Post-measurement state.]{\hspace{0.5cm}\includegraphics{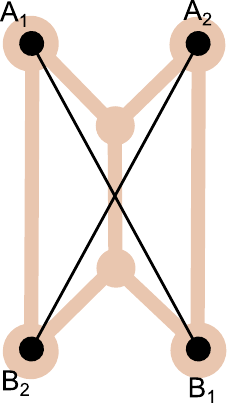}\hspace{0.5cm}} %
 \caption{The butterfly network. $A_1$ and $A_2$ want to share a Bell pair with $B_1$ and $B_2$, respectively. Orange (light grey) edges have weight $-1$ and black edges have weight one unless explicitly labelled otherwise.}\label{fig:qbutterfly}%
\end{figure}%
The desired target state has two edges: one from $A_1$ to $B_1$ and one from $A_2$ to $B_2$, each with edge weight $1$, i.e. $A_i$ and $B_i$ want to share a Bell pair. The communication on each network link is restricted to a single qudit. This task is impossible with a routing strategy~\cite{Satoh16}. From the main stabilizer operators $X_{A_1} X_2 Z_{B_1}$, $X_{A_2} X_2 Z_{B_2}$, $X_{B_1} X_1 Z_{A_1}$ and $X_{B_2} X_1 Z_{A_2}$ it is clear, that the desired state is produced up to by-product operators (see Appendix~\ref{app:byproduct}) by $X$ measurements on qudits $1$ and $2$.\\
The example of the butterfly network is actually translated from classical network coding theory (see below). We point out that a quantum network code like the one discussed previously can be found for any classical linear network code, see \cite{deBeaudrap14} and Theorem~\ref{thm:codeconstruction}.

\subsection{Classical network coding}
We focus on wired networks, which consist of nodes connected by communication lines (see FIG.~\ref{fig:butterfly}). The nodes can be distinguished into three classes: sources ($A$) send information to sinks ($B$), possibly via intermediate sites ($C$). Sources and sinks will be denoted parties. A classical network can be represented mathematically by a graph $G_{cl.}=(V_{cl.},E_{cl.})$, where $V_{cl.}$ is the set of nodes or vertices and $E_{cl.}$ is the set of links or directed edges. It will be convenient to associate imaginary edges with the sources (sinks) that do not have a starting point (endpoint) and transmit the message created (decoded) at the source (sink). Each edge $e\in E_{cl.}$ has the form $(a,b)$ and implies that a single symbol can be transmitted from $a$ to $b$ in a single network use. The underlying alphabet contains $D$ different symbols and will be denoted as the base field $\mathds{F}_D$. $G_{cl.}$ is required to be acyclic, which implies that nodes and the transmissions on the edges can be brought into a time order, such that each node received the incoming packets before it sends the outgoing ones. With this order of the nodes and edges at hand, we define for each node $v$ the vectors $\In(v)$ and $\Out(v)$ to contain the parents and children of $v$, respectively. In a slight abuse of the notation we denote the in- and out-degree of $v$ by $|\In(v)|$ and $|\Out(v)|$, respectively.\\
The instructions for each node how to process the packets form the network code. We focus on linear network codes meaning that the outgoing packets are linear combinations of the incoming packets. Even though situations in which non-linear codes outperform the linear ones are known~\cite{Dougherty05}, the latter nevertheless form a very important class of network codes. They are proven to be optimal in networks with a single source. Furthermore the linear nature allows for a simple decoding.\\
Following Raymond W. Yeung~\cite{Yeung08} we describe a linear network code over base field $\mathds{F}_D$ by its local and global encoding kernels. The global encoding kernel $\vec{f}_e$ of an edge $e$ gives the coefficients of the packet transmitted across $e$ in the basis of the source packets. For $|A|$ symbols created by sources the network code is called $|A|$-dimensional and $\vec{f}_e\in \mathds{F}_D^{|A|}$. The local encoding kernels are $|\Out(v)| \times |\In(v)|$-matrices $K_v$ for each node, that define the linear combination of the outgoing packets. The local encoding kernel $K_v$ at vertex $v$ relates the global encoding kernels of the incoming and outgoing edge via
\begin{equation}
 \vec{f}_{(v,\Out(v)_i)}=\sum_j {(K_v)}_{ij} \vec{f}_{(\In(v)_j,v)}.
\end{equation}
The notation is exemplified with the (classical) butterfly network in FIG.~\ref{fig:butterfly}. A linear network code on the Butterfly can outperform the best routing strategy.
\begin{figure}[tp]%
\subfigure[The global encoding kernels.]{\includegraphics[scale=0.9]{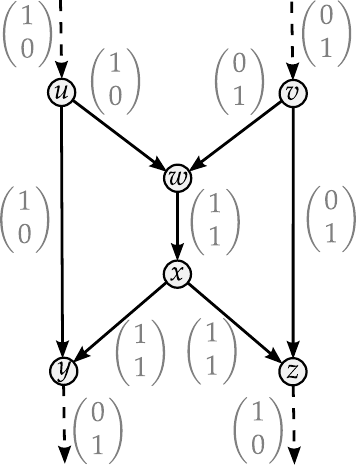}}\hspace{1cm} %
\subfigure[The local encoding kernels.]{\includegraphics[scale=0.9]{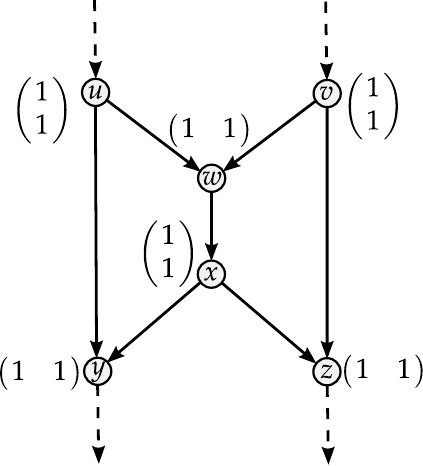}} %
\caption{The Butterfly network. It consists of two sources $A=\{u,v\}$, two sinks $B=\{y,z\}$ and two intermediate nodes $C=\{w,x\}$. The task is for $u$ to transmit a bit to $z$, while simultaneously $v$ casts a bit to $y$. Under the restriction that each link has capacity one, this task is not achievable with a routing strategy. But the depicted linear network code achieves it. Dashed edges are imaginary.}\label{fig:butterfly} %
\end{figure} %
The overall effect of any linear network code with sources $A$ and sinks $B$ can be described by the $|B|\times |A|$-matrix $F$ that maps the vector of symbols sent by the sources to the vector of symbols received at the sinks. The network coding decomposes this matrix into a product of block matrices, the local encoding kernels, such that the rate constraints of the edges are fulfilled.\\

\subsection{Distribution of quantum two-colorable graph states}
In this section we describe quantum network codes that distribute graph states associated with two-colorable (bipartite) graphs. A two-colorable graph $G=(V,E)$ allows to split $V$ into two sets $V_1$ and $V_2$, such there are no edges between vertices of the same set. The construction is the same as the one of~\cite{Beaudrap14}, but we use the network for the distribution of graph states instead of teleportation only. The following graph state will be a useful building block.
\begin{definition}[Graph state for a linear map]\label{def:GK}
Let an arbitrary linear function $K:\mathds{F}_D^n\rightarrow \mathds{F}_D^m$ be given via its $m\times n$-matrix representation $K$. We define the graph state $\ket{K}$ via the two-colorable graph $G_K$ with partitions $A=\{A_1,A_2,...,A_n\}$ and $B=\{B_1,B_2,...,B_m\}$, i.e. vertices $V=\{A,B\}$, and adjacency matrix 
\begin{equation}
 \Gamma=\left(\begin{array}{cc}
                  0   & K^T\\
                  K & 0
                 \end{array}\right).
\end{equation}
\end{definition}
\noindent We remark that the state
\begin{equation}
 \ket{\psi_B'}=H^{\otimes m}\sum_{i=0}^{d^n-1} \braket{i}{\psi_A} \ket{K\vec{i}},
\end{equation}
where $\vec{i}$ is the vector of base-$D$ digits of $i$,
is produced up to by-product operators, if $K$ is injective, the qudits $A$ are initialized in an arbitrary state $\ket{\psi_A}$ instead of $\ket{+}^{\otimes m}$ and measured in $X$-basis~\cite{Beaudrap14}. This relates the considerations in terms of stabilizers to the quantum state produced in measurement-based quantum computations. See Appendix~\ref{app:inputoutputstate} for details.\\
As mentioned above, a linear network code is the decomposition of $F$ into a product of local encoding kernels, such that the rate constraints are fulfilled. A concatenation of the graph states corresponding to the local encoding kernels $K_v$ according to Definition~\ref{def:GK}, see FIG.~\ref{fig:graphextension}, translates the linear network code to a quantum network code. \begin{figure}[tp]%
\centering%
  \subfigure[A part of a linear network coding.]{\includegraphics[scale=1]{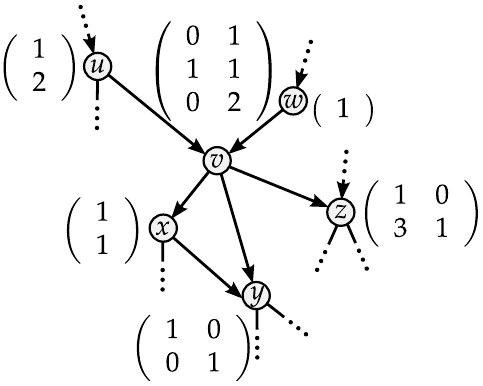}}\hfill%
  \subfigure[The same part in the quantum network coding.]{\includegraphics[scale=1]{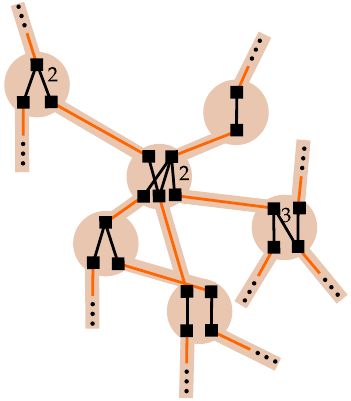}}%
 \caption{The construction of $G$ (b) from $\tilde{G}$ (a). The local subgraphs of (b) are given by Definition~\ref{def:GK} for the linear map of the local encoding kernels shown in (a). Orange (light grey) edges have weight $-1$ and black edges have weight one unless explicitly labelled otherwise.}\label{fig:graphextension}%
\end{figure}%
This is done in the following theorem. It generalizes the results of \cite{Beaudrap14}, which considers teleportation in the described network, to the distribution of general two-colorable graph states.
\begin{theorem}\label{thm:codeconstruction}
Let an $|A|$-dimensional linear network code on an acyclic network $G_{cl.}=(V_{cl.},E_{cl.})$ over base field $\mathds{F}_D$ be given by the local encoding kernels $K_v$ at each node $v\in V_{cl.}$. Let $A_i$, $i=1,2,...,|A|$, and $B_{j}$, $j=1,2,...,|B|$, denote qudits associated with each imaginary source and sink edge, respectively. Then there exists a measurement-based quantum network code that produces the graph state $\ket{G}$ on the qudits $V=A \cup B$, defined via the adjacency matrix of $G$,
\begin{equation}
 \Gamma=\left(\begin{array}{cc}
                    0 & F^T\\
                    F & 0
                   \end{array}\right),
\end{equation}
where the $j$-th row of $F$ is the global encoding kernel on the $j$-th imaginary sink edge.
\end{theorem}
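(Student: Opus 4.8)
The plan is to realize the pre-measurement state by local replacement, to isolate the main stabilizers that thread the network from the sources to the sinks, and then to collapse them by $X$-measurements on the internal qudits. First I would fix the pre-measurement graph $G_{\mathrm{pre}}$: following a time order in which each node receives before it sends, I replace every classical node $v\in V_{cl.}$ by the bipartite graph state $\ket{K_v}$ of Definition~\ref{def:GK}, whose two partitions are indexed by $\In(v)$ and $\Out(v)$ and whose edges realize the local map $K_v$, and I glue consecutive subgraphs along the internal edges of $G_{cl.}$ using the weight-$\pm 1$ connector edges drawn in FIG.~\ref{fig:graphextension}. Under this identification the qudits of $G_{\mathrm{pre}}$ are attached to the edges of $G_{cl.}$: the imaginary source and sink edges supply the qudits $A$ and $B$, while the genuine internal edges supply the qudits to be measured, and the alternating colours of the glued subgraphs make $G_{\mathrm{pre}}$ two-colourable.

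Next I would construct, for each source $A_i$, a main stabilizer: a product of the generators $g_v$ of Eq.~(\ref{eq:graphgenerators}) whose $Z$-operators cancel on every internal qudit. Starting from $g_{A_i}$ and sweeping along the time order, at each node I multiply in the generators centred on its outgoing-edge qudits with the exponents needed to annihilate the $Z$-operators deposited by the preceding factors. The claim I would aim to establish is that the resulting operator carries $X$ on each internal qudit $e$ with exponent equal to the $i$-th component of the global encoding kernel $\vec{f}_e$ (up to signs absorbed into the weight-$-1$ edges and by-product operators), and $Z$ on each sink $B_j$ with exponent $F_{ji}$, the $(j,i)$-entry of the transfer matrix.

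The substance of the argument is to prove this claim by induction along the acyclic order, and I expect this to be the main obstacle, since it is where the algebra of the graph-state generators must be matched to the linear propagation of the code. The inductive step at a node $v$ is precisely the global encoding kernel relation $\vec{f}_{(v,\Out(v)_k)}=\sum_j (K_v)_{kj}\,\vec{f}_{(\In(v)_j,v)}$: the exponents required to cancel the incoming $Z$-operators are read off from the entries of $K_v$, and therefore the accumulated $X$-exponents on the outgoing qudits evolve exactly as the global kernels do. I would have to check additionally that this choice leaves no residual $Z$ on any internal qudit and that the phases produced by commuting $X$- and $Z$-powers past each other are harmless, i.e. collected into the by-product operator.

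Finally I would measure every internal qudit in the $X$-basis and apply the update rule recorded above, replacing a factor $\mathcal{A}^x\otimes\mathcal{B}$ by $\alpha^x\mathcal{B}$. The $X$-factors on the internal qudits become scalars, so the main stabilizer built on $A_i$ reduces to $X_{A_i}\prod_j Z_{B_j}^{F_{ji}}$ up to a phase that is absorbed into a by-product operator. Because $\Gamma_{A_iB_j}=(F^T)_{ij}=F_{ji}$, these are exactly the generators $g_{A_i}$ of the graph state $\ket{G}$ with adjacency matrix $\left(\begin{smallmatrix}0 & F^T\\ F & 0\end{smallmatrix}\right)$; the symmetric sweep starting from each sink $B_j$ supplies the complementary generators, so that together they fix $\ket{G}$ uniquely and prove the theorem, generalizing the teleportation result of~\cite{Beaudrap14} to arbitrary two-colourable target states.
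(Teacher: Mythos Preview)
Your proposal is correct and follows essentially the same route as the paper: build the pre-measurement graph by concatenating the local $\ket{K_v}$ gadgets with weight-$(-1)$ connector edges, and then exhibit main stabilizers whose $X$-exponents on intermediate qudits are the components $(\vec{f}_e)_i$ of the global encoding kernels, verified via the local-to-global relation, so that after $X$-measurements only $X_{A_i}\prod_j Z_{B_j}^{F_{ji}}$ survives. The one refinement the paper makes explicit that you leave implicit is that the ``symmetric sweep'' from each sink is organised by introducing the converse linear network code on $G_{cl.}^T$ with local kernels $K_v^T$ and its own global kernels $\vec{g}_e$, which supply the exponents in $S_{B_j}$ exactly as the $\vec{f}_e$ do in $S_{A_i}$.
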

The proof and more details on the construction of the pre-measurement graph state are given in Appendix~\ref{app:fromlinearcodes}. We remark that Theorem~\ref{thm:codeconstruction} contains the production of $|A|$ GHZ states as a special case, when the imaginary edges of the sink nodes are canonical basis vectors as defined by the imaginary edges of the source nodes. The construction of the quantum network code is illustrated for the example of a network of one source $p$ and six sinks that does not allow for a binary multicast, i.e. it is not possible to send two bits from $p$ to all sinks. But a ternary multicast is possible~\cite{Yeung08}. The corresponding QNC is shown in Fig.~\ref{fig:constructionexample}. \begin{figure}[tp]%
\centering%
  \subfigure[A ternary linear network coding.]{\includegraphics[scale=1]{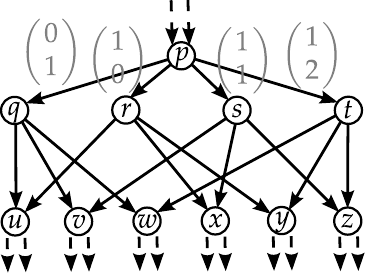}}\hfill %
  \subfigure[The corresponding quantum network coding.]{\includegraphics[scale=1]{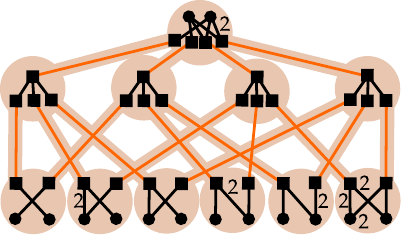}}%
 \caption{A linear network code (taken from Yeung, Fig. 19.5) and its corresponding quantum network code. Grey vectors are global encoding kernels. The network does not allow for a binary linear network code to achieve the task. The QNC produces two 7-qutrit GHZ states. Orange (light grey) edges have weight $-1$ and black edges have weight one unless explicitly labelled otherwise.}\label{fig:constructionexample}%
\end{figure}\\
\subsection{Equivalent network codes}
The pre-measurement graph state of Theorem~\ref{thm:codeconstruction} is not necessarily the simplest possible one. A network code can be simplified by using the graph state after the $X$-measurement on a qudit (see \cite{Hein04} for the case $D=2$) right from the start if it is compatible with the network constraints (rates of the channels). In many cases the state after $X$-measurements is given by the following theorem.
\begin{theorem}[Network simplification]\label{thm:simplification}
 Let $G=(V,E)$ with adjacency matrix $\Gamma$ be given. Denote the neighbourhood of $v\in V$ as $N_v=\{w\in V|\Gamma_{vw}\neq 0\}$.
 Let $a,b\in V$ be neighbours (i.e. $b\in N_a$), s.t. $N_a\cap N_b=\{\}$ and $\Gamma_{ab}$ and $D$ are co-prime. Via a measurement of $a$ in $X$-basis, the state $\ket{G}$ is projected onto $\ket{G'}$ (up to by-product operators) with adjacency matrix $\Gamma'$, which equals $\Gamma$ except for
\begin{align}
 \Gamma'_{cb}=&\frac{\Gamma_{ac}}{\Gamma_{ab}},\\
 \Gamma'_{cd}=&-\frac{\Gamma_{bd}\Gamma_{ac}}{\Gamma_{ab}}\\
 \text{and }\Gamma'_{ab}=&\Gamma'_{ac}=\Gamma'_{bd}=0,
\end{align}
where $c\in N_a\backslash \{b\}$ and $d\in N_b\backslash \{a\}$.
\end{theorem}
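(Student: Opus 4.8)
The plan is to work entirely in the stabilizer formalism of Eq.~(\ref{eq:graphgenerators}) and to track how the generators $g_v=X_v\prod_w Z_w^{\Gamma_{vw}}$ transform under an $X$-measurement on $a$, using the update rules summarised at the start of Section~\ref{sec:graphstates}. Writing $\omega=\eulere^{2\pi\ramuno/D}$ and recalling $Z^jX^k=\omega^{jk}X^kZ^j$, the first step is to sort the generators by their commutation with the measured observable $X_a$: since $g_v$ touches qudit $a$ only through the factor $Z_a^{\Gamma_{va}}$, it commutes with $X_a$ exactly when $\Gamma_{va}\equiv 0 \pmod D$, i.e. for $v\notin N_a$ (and trivially for $v=a$). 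Hence the only generators anticommuting with $X_a$ are the $g_c$ with $c\in N_a$, and $g_b$ is a legitimate ``reference'' generator precisely because $\Gamma_{ab}$ is coprime to $D$, so that $\omega^{\Gamma_{ab}}\neq 1$.

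Next I would set $\lambda=\Gamma_{ab}^{-1}\bmod D$ (which exists by the coprimality hypothesis) and, following the standard measurement update, keep $g_v$ untouched for $v\notin N_a\cup\{a\}$ while replacing each anticommuting $g_c$ (with $c\in N_a\setminus\{b\}$) by $\tilde g_c=g_c\,g_b^{-\Gamma_{ac}\lambda}$. A short computation, using $N_a\cap N_b=\{\}$ to guarantee that the indices $c$ and $d\in N_b\setminus\{a\}$ are all distinct so that no operators collide, shows that the exponent of $Z_a$ in $\tilde g_c$ vanishes, confirming that $\tilde g_c$ commutes with $X_a$; explicitly $\tilde g_c=X_c X_b^{-\Gamma_{ac}\lambda}\prod_{w}Z_w^{\Gamma_{cw}}\prod_{d}Z_d^{-\Gamma_{ac}\Gamma_{bd}\lambda}$ (with $w\in N_c\setminus\{a\}$ and $d\in N_b\setminus\{a\}$). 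The generator $g_b$ is consumed by the measurement, while $g_a$ commutes with $X_a$ and, upon substituting the outcome $X_a\mapsto\alpha$ and tracing out $a$, collapses to the pure-$Z$ operator $s_b=\alpha\prod_{w\in N_a}Z_w^{\Gamma_{aw}}$; raising it to the power $\lambda$ isolates $Z_b$ as $s_b^{\lambda}=\alpha^{\lambda}Z_b\prod_{c}Z_c^{\Gamma_{ac}\lambda}$.

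The decisive step is to recognise the by-product operator that restores graph-state form; here $b$ inherits the role of the measured vertex $a$. Applying a local Fourier transform $H_b^{\dagger}$, which sends $Z_b\mapsto X_b$ and $X_b\mapsto Z_b^{-1}$, together with an outcome-dependent Pauli correction absorbing the phases $\alpha^{\lambda}$, turns $s_b^{\lambda}$ into the genuine graph generator $g'_b=X_b\prod_{c}Z_c^{\Gamma_{ac}\lambda}$, i.e. $b$ becomes adjacent to the former neighbours of $a$ with weights $\Gamma_{ac}/\Gamma_{ab}$. The same $H_b^{\dagger}$ converts the residual $X_b^{-\Gamma_{ac}\lambda}$ in $\tilde g_c$ into $Z_b^{\Gamma_{ac}\lambda}$ and, after multiplying the transformed generators $g_d$ (for $d\in N_b\setminus\{a\}$) by suitable powers of $g'_b$ to cancel their newly created $X_b$ factors, one reads off $\Gamma'_{cb}=\Gamma_{ac}/\Gamma_{ab}$ and $\Gamma'_{cd}=-\Gamma_{ac}\Gamma_{bd}/\Gamma_{ab}$, with the edges $a$-$b$, $a$-$c$ and $b$-$d$ all erased and every remaining entry of $\Gamma$ untouched. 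Finally I would verify that these objects form a valid, mutually commuting graph-state stabilizer on $V\setminus\{a\}$, which pins down $\ket{G'}$ uniquely.

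I expect the main obstacle to be exactly this by-product bookkeeping: the correction is a local Clifford (the Fourier transform on $b$) rather than a mere Pauli, and getting the signs and the modular inverses consistent---so that the $X_b$ factors introduced into the neighbours of $b$ cancel and the stated formulas for $\Gamma'$ emerge---is the delicate part. A secondary subtlety is that when a former neighbour $c$ of $a$ already shared an edge with a former neighbour $d$ of $b$, the rule for $\Gamma'_{cd}$ must be read additively, so that the stated value is the contribution added to the pre-existing weight; the disjointness condition $N_a\cap N_b=\{\}$ is what keeps all other edges clean and the formulas in the simple closed form given.
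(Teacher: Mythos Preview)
Your proposal is correct and follows essentially the same route as the paper's proof: form the main stabilizers $g_c g_b^{-\Gamma_{ac}/\Gamma_{ab}}$ to eliminate $Z_a$, use $g_a$ (after the measurement and the local Fourier $H_b^\dagger$) as the new generator $g_b'$, and read off $\Gamma'$ from the resulting $Z$-exponents. The only cosmetic difference is in how you treat the generators $g_d$ for $d\in N_b\setminus\{a\}$: the paper multiplies $g_d$ by $g_a^{-\Gamma_{bd}/\Gamma_{ab}}$ \emph{before} the measurement (so the $Z_b$ factor is removed and $X_a$ is absorbed into the outcome), whereas you keep $g_d$ through the measurement and the $H_b^\dagger$, and only afterwards cancel the resulting $X_b^{\Gamma_{bd}}$ by multiplying with $(g_b')^{-\Gamma_{bd}}$; since $g_b'$ is exactly the image of $g_a^{1/\Gamma_{ab}}$ under these operations, the two are the same manipulation performed at different stages.
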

The theorem is proven by finding appropriate main stabilizers, see Appendix~\ref{app:proofofsimplification}. It can also be applied in the other direction. For example subdividing an edge by inserting a node on it leads to an equivalent quantum network code (this is a ``quantum repeater'').\\
Theorem~\ref{thm:simplification} can be applied to more general graph states than the ones of Theorem~\ref{thm:codeconstruction}, e.g. to networks where the graph of the distributed graph state is not two-colorable. Different codings, i.e. different pre-measurement graph states that are compatible with the network constraints, can lead to the same post-measurement graph state. Two equivalent codings are shown in FIG.~\ref{fig:equivalentcodings}. \begin{figure}[htbp] %
\subfigure[First variant.]{\includegraphics{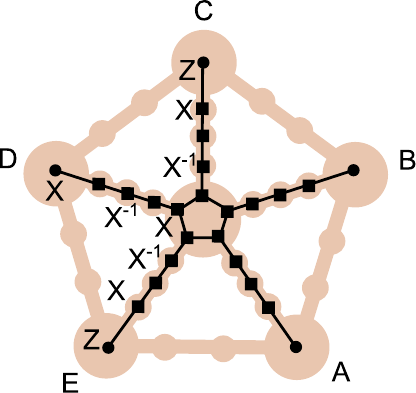}\label{fig:EquivalentCodingsa}} %
\subfigure[Second variant.]{\includegraphics{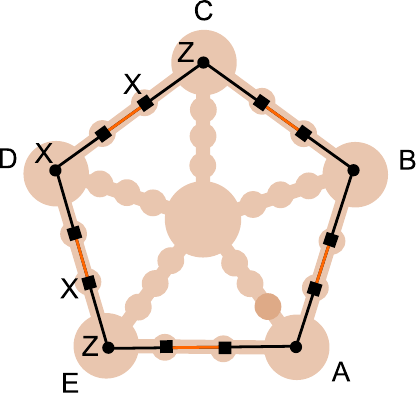}\label{fig:EquivalentCodingsb}} %
\subfigure[Post-measurement state.]{\includegraphics{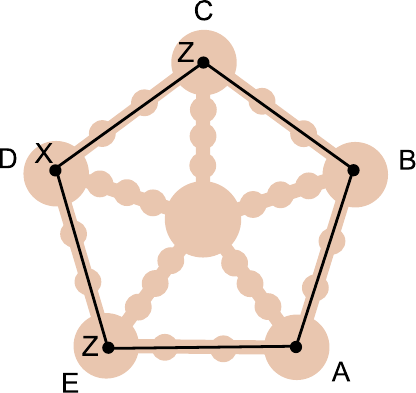}\label{fig:EquivalentCodingsc}} %
\caption{Different quantum network codes can lead to the same distributed graph state. Orange (light grey) edges have weight $-1$, black edges have weight one unless explicitly labelled otherwise. The brown background indicates the network sites and available links. The post-measurement state can be read from the main stabilizers like the one centred on $D$, which is shown in \subref{fig:EquivalentCodingsa} and \subref{fig:EquivalentCodingsb}. The final state is not local unitary equivalent to any two-colorable graph state.}\label{fig:equivalentcodings}
\end{figure}
\section{Robustness of quantum network coding}\label{sec:robustness}
An $[[n,k,d]]$ quantum error correction code encodes $k$ logical qudits into $n$ physical qudits with code distance $d$, i.e. it can correct up to $\lfloor \frac{d-1}{2} \rfloor$ errors at unknown positions or up to $d-1$ errors if the affected qudits are known. The measurement of a stabilizer operator detects an error that anti-commutes with it, because the measurement outcome becomes $-1$~\cite{Gottesman97,LidarBrunQEC13}.\\
In principle a graph state is an $[[n,k=0]]$ stabilizer quantum error correction code~\cite{Gottesman97,LidarBrunQEC13}, which can correct an arbitrary amount of errors. However, this would require to measure the stabilizer generators, which are not local and we consider this impossible in the network scenario. Also graph codes~\cite{Schlingemann01,Schlingemann01b}, i.e. quantum error correction codes where the codewords are elements of the graph state basis, which can encode data ($k>0$) require to measure non-local stabilizer operators in general.
But it is possible to detect a restricted set of errors from the data obtained in the $X$-measurements on the intermediate qudits in a QNC scheme. Namely any error that does not commute with an $X$-chain is detected~\cite{Wu15}, because the product of the measurement outcomes (to the appropriate power) on an $X$-chain is $1$ in case of no errors.\\
A tandem arrangement of weight one edges acts as an identity: The graph \raisebox{0.15\baselineskip}{\includegraphics{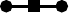}} is equivalent to \raisebox{0.15\baselineskip}{\includegraphics{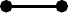}}, see Theorem~\ref{thm:simplification}, which obviously can be used for teleportation (see remark after Definition~\ref{def:GK}). The explicit calculation is done in Appendix~\ref{app:tandem}. We can thus form a three qudit repetition code as shown in FIG.~\ref{fig:repetitioncode}. %
\begin{figure}[tbp]%
 \subfigure[A three-qudit-repetition code.]{\label{fig:repetitioncode}\hspace{0.5cm}\includegraphics{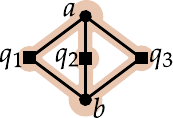}\hspace{0.5cm}}%
 \subfigure[The nine-qudit-Shor code.\label{fig:shorcode}]{\includegraphics{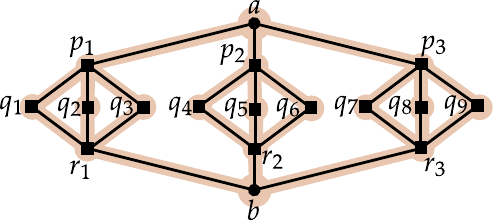}}%
\caption{Simple network variants of error correction codes. All edges have weight one.}\label{fig:simplecodes}%
\end{figure}%
We think of such a network as a distributed storage system: quantum information is sent to the three storage devices $q_1$, $q_2$, and $q_3$, where it is saved for some duration. Afterwards it is sent to $b$ where it is read out and/or processed. However, with some probability a memory gets corrupted or lost, e.g. due to some power outage, a malfunction or an adversary intervention. Suppose that an unnoticed $Z$ error occurs on one of the memories, say $q_1$, in the network of FIG.~\ref{fig:repetitioncode}. This error shifts (rotates) the outcome of the $X$-measurement on $q_1$. But, because $X_{q_1} X_{q_2}^{-1}$ and $X_{q_1} X_{q_3}^{-1}$ are $X$-chains of the graph state, the measurement outcomes of $q_1$, $q_2$ and $q_3$ need to be equal in case of no errors. By majority vote we decide that the error occurred on $q_1$ and we can correct it. There is also a slightly different way of understanding the error correction in a network. The network acts as the identity, because the main stabilizer operators $X_a X_b^{-1}$ and $Z_a X_{q_1} Z_b$ are projected onto $X_a X_b^{-1}$ and $Z_a Z_b$ by the $X$-measurements on the intermediate qudits. This Bell pair can be used to teleport a quantum state. Instead of choosing $Z_a X_{q_1} Z_b$ we could also choose $Z_a X_{q_2} Z_b$ or $Z_a X_{q_3} Z_b$, i.e. we only need to find one pair of error free main stabilizer operators to do teleportation. If the position of the error is known, which we will consider to be the case in the remainder, then two errors can be corrected/tolerated. We assume that an outage of a node is noticed.\\
Of course the presented three qudit repetition code is not able to correct $X$ errors on the storage devices. We therefore enlarge the code to the nine-qudit-Shor code, see FIG.~\ref{fig:shorcode}. 
The $X$-chain group on the intermediate qudits is generated by the operators
\begin{equation}
 \begin{aligned}
  s_1=&X_{q_1} X_{q_3}^{-1},\\
  s_2=&X_{q_2} X_{q_3}^{-1},\\
  s_3=&X_{q_4} X_{q_6}^{-1},\\
  s_4=&X_{q_5} X_{q_6}^{-1},\\
  s_5=&X_{q_7} X_{q_9}^{-1},\\
  s_6=&X_{q_8} X_{q_9}^{-1},\\
  s_7=&X_{p_1} X_{r_1} X_{p_3}^{-1} X_{r_3}^{-1},\\
\text{and }  s_8=&X_{p_2} X_{r_2} X_{p_3}^{-1} X_{r_3}^{-1},
 \end{aligned}
\end{equation}
corresponding to the generators of the stabilizer of the code space of the nine-qudit-Shor code. A possible choice of the two main stabilizer operators that are projected onto the stabilizer generators of a Bell pair are
\begin{align}
 S_1=&X_a X_{q_1}^{-1} X_{q_4}^{-1} X_{q_6}^{-1} X_b\\
 S_2=&Z_a X_{p_1} X_{r_1}^{-1} Z_b^{-1},
\end{align}
while again multiplication of $S_i$ with any $X$-chain leads to another possible choice for $S_i$. As before, the error $Z_{q_1}$ shifts the outcome of the $X$-measurement outcome on $q_1$. An $X$-error on $q_1$ does not affect the outcome at $q_1$, but it propagates to an $Z$-error on $r_1$ via the $C_Z$-gate. Again we assume that the positions of losses are known. Then $S_1$ requires that there is at least one error-free qudit in each group ($\{q_1,q_2,q_3\}$, $\{q_4,q_5,q_6\}$ and $\{q_7,q_8,q_9\}$), while $S_2$ requires, that there is at least one group with no loss at all. The nine-qudit-Shor code can be generalized to an arbitrary amount of blocks and an arbitrary number of qudits per block, which is sometimes referred to as quantum parity code~\cite{Ralph05}.\\
The link between the network of Fig.~\ref{fig:shorcode} and the Nine-qubit-Shor code, and an analogue correspondence for all network error correction codes, is made formal in the following definition.
\begin{definition}[Network error correction code]\label{def:networkerrorcorrection}
A graph state $\ket{G}$ associated with $G=(V,E)$ with vertices $V=A\cup C \cup B$ (sources, intermediate nodes and sinks) and adjacency matrix $\Gamma$ defines an $[[n,k,d]]$ quantum network code w.r.t. $Q=\{q_1,q_2,...,q_n\}\subset C$ (the memories), if
\begin{enumerate}
 \item for $i\in\{1,2,...,k\}$ the operators $X_{a_i} Z_{b_i}$ and $Z_{a_i} X_{b_i}$ stabilize the state after $X$-measurements on $C$ and application of by-product operators (on $A$ and $B$)
 \item and $d$ is the smallest weight of an error acting on $Q\subset V$ that commutes (after error propagation) with all $X$-chains on $C$.
 \item Each main stabilizer between $A$ and $B$ acts on $Q$ or $Out(Q)$ non-trivially.
\end{enumerate}
\end{definition}
In the following we motivate the previous definition by mapping the network code to a stabilizer error correction code. An $X$-chain of a graph state is a stabilizer operator which does not contain $Z$-operators~\cite{Wu15}. $X$-chains form a group with the usual multiplication. The $X$-measurements on $C$ determine the measurement outcomes of all $X$-chains simultaneously. Let $s_i$ denote the generators of the $X$-chain group on $G$ restricted to $C$ (i.e. they act trivially on $A$ and $B$). This group is mapped to the stabilizer group of an $[[n,k,d]]$ stabilizer error correction code generated by
\begin{equation}
 \tilde{s}_i = \prod_j X_j^{\alpha_{ij}} \prod_{h=1}^{|Out(q_j)|} Z_{q_j}^{\beta_{ijh} \Gamma_{q_j,Out(q_j)_j}}
\end{equation}
with $\alpha_{ij}$ and $\beta_{ijh}$ being the power of $X_{q_j}$ and $X_{Out(q_j)_h}$ in $s_i$, respectively. That is, $\tilde{s}_i$ contains an $X$ operator for each $X$ operator on $Q$ and a $Z$ operator for each $X$ operator on a successor of $Q$, respectively, in $s_i$. The $C_Z$ gates corresponding to the edges between the qudits in $Q$ and their successors $Out(Q)$ propagate $X$-errors on $Q$ to $Z$-errors on $Out(Q)$. Thus after error propagation an error on $Q$ commutes with $\tilde{s}_j$ if and only if the same error in the stabilizer code commutes with $s_j$. An error is undetected if and only if it commutes with all stabilizer generators. Therefore the network code with generators $s_j$ and the stabilizer code with generators $\tilde{s}_j$ have the same error correction capabilities.\\
The mapping of Definition~\ref{def:networkerrorcorrection} can be used to analyse the robustness of a QNC w.r.t. errors on $Q$. We do this for the two networks shown in FIG.~\ref{fig:diversitycoding}. %
\begin{figure}[htbp]%
 \subfigure[A QNC corresponding to the linear code of FIG. 18.5 in \cite{Yeung08}.]{\centering \hspace{1cm}\includegraphics{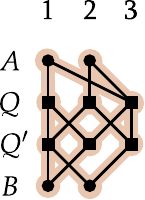}\hspace{1cm}}%
 \subfigure[A $\lbrack\lbrack 4,2,2 \rbrack\rbrack$ QNC.\label{fig:schubertcode}]{\centering \hspace{1cm}\includegraphics{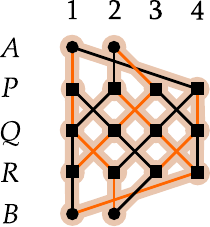}\hspace{1cm}}%
 \caption{Quantum network coding can be used for diversity coding~\cite{Yeung08}: Errors on some of the memories (Q) can be corrected using the data of the $X$-measurements on all qudits except at A and B.}\label{fig:diversitycoding}%
\end{figure}%
The first network is a QNC analogue of the diversity coding scheme of FIG. 18.5 in \cite{Yeung08}. In this example, a possible choice of main stabilizer operators reads
\begin{align}
S_{A_1}=& X_{A_1} X_{Q_2'}^{-1} Z_{B_1}^{-1},\\
S_{A_2}=& X_{A_2} X_{Q_3'}^{-1} Z_{B_2}^{-1},\\
S_{B_1}=& X_{B_1} X_{Q_1}^{-1} Z_{A_1}^{-1},\\
\text{and }S_{B_2}=& X_{B_2} X_{Q_2}^{-1} Z_{A_2}^{-1},
\end{align}
i.e. two Bell pairs shared by A and B are produced.\\
One easily verifies that the $X$-chain group on $C=Q\cup Q'$ is $\langle X_{q_1} X_{q_2} X_{q_3}^{-1}, X_{q_1'}^{-1}X_{q_2'}X_{q_3'}\rangle$. These generators correspond to $\tilde{s}_1=X^{-1} X X$ and $\tilde{s}_2=\mathds{1}\mathds{1}Z^2$. While this code can detect a single $Z$ error on any memory, it cannot detect an $X$ error, i.e. it can only correct one type of errors. Consequently it has $d=1$. One might expect this result for a classical code.\\
The generators of the $X$-chain group on $C= P\cup Q \cup R$ in the expanded network in FIG.~\ref{fig:schubertcode} and the corresponding codespace stabilizer operators are
\begin{equation}
 \begin{aligned}
  s_1=&X_{q_1}X_{q_2}X_{q_3}X_{q_4}, &  \tilde{s}_1=&XXXX,\\
  s_2=&X_{p_1}X_{p_4}X_{r_1}X_{r_4}, &  \tilde{s}_2=&ZZ^{-1}Z^{-1}Z,\\
  s_3=&X_{p_2}X_{p_3}X_{r_2}X_{r_3}, &  \tilde{s}_3=&Z^{-1} Z Z Z^{-1},\\
  s_4=&X_{p_1}X_{p_2}X_{p_3}X_{p_4}, \text{ and} &  \tilde{s}_4=&\mathds{1}\mathds{1}\mathds{1}\mathds{1}.
 \end{aligned}
\end{equation}
From this one readily obtains that the QNC of FIG.~\ref{fig:schubertcode} is a $[[4,2,2]]$ code, i.e. it encodes two qudits into four qudits and it can tolerate a single erasure in $Q$.\\
A construction similar to the previous ones can be used to obtain a quantum network code from any stabilizer quantum error correction code.
\begin{theorem}\label{thm:stabcodes}
Given an $[[n,k,d]]$ stabilizer quantum error correction code $\mathcal{C}$ with parity check matrices $H_X$ and $H_Z$ in prime dimension $D$. Let the $i$-th logical $X$-operator be given by the numbers $x_{ij}\in \mathds{F}_D$, s.t.
\begin{equation}
\bar{X}_i=\prod_{j=1}^n X_j^{x_{ij}}.
\end{equation}
Denote $V=A \cup A' \cup B \cup B' \cup S \cup Q \cup S'$, with $A=\{a_1,a_2,...,a_k\}$, $A'=\{a_1',a_2',...,a_k'\}$, $B=\{b_1,b_2,...,b_k\}$, $B'=\{b_1',b_2',...,b_k'\}$, $S=\{s_1,s_2,...,s_{n-k}\}$, $S'=\{s_1',s_2',...,s_{n-k}'\}$, and $Q=\{q_1,q_2,...,q_n\}$.
The network with vertices $V$ and adjacency matrix
\begin{equation}
\Gamma=\left(\begin{array}{ccccccc}
           & -\Gamma_1 & & & & & \\
-\Gamma_1 &  & & & & \Gamma_X & \\
& & & \Gamma_1 & & & \\
& & \Gamma_1 & & &  & -\Gamma_X\\ 
& &  & & -\Gamma_S & H_Z &\\ 
& \Gamma_X &  &  & H_Z^T &  & -H_Z^T\\ 
&  &  & -\Gamma_X &  & -H_Z & \Gamma_S
\end{array}\right)
\end{equation}
with 
\begin{align}
(\Gamma_X)_{ij}=x_{ij},\\
(\Gamma_1)_{ij}=&\delta_{ij} \mathrm{wt}(\bar{X}_i),\\
\text{and }(\Gamma_S)_{ij}=&\sum_l (H_X)_{il} (H_Z)_{jl}
\end{align}
implements an $[[n,k,d]]$ QNC w.r.t. errors on $Q$. Here $\mathrm{wt}(\bar{X}_i)$ denotes the number of qudits on which the $i$-th logical $X$ operator acts non-trivially. The network produces $k$ Bell pairs and thus allows to teleport $k$ qudits from $A$ to $B$. 
\end{theorem}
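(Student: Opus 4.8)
The plan is to verify directly the three conditions of Definition~\ref{def:networkerrorcorrection} for the graph state $\ket{G}$ defined by the given adjacency matrix $\Gamma$, taking $A$ and $B$ as the parties, $C=A'\cup B'\cup S\cup Q\cup S'$ as the intermediate qudits measured in the $X$-basis, and $Q$ as the memories. Everything then reduces to exhibiting the right products of the graph-state generators $g_v$ of Eq.~(\ref{eq:graphgenerators}) and reading off the post-measurement stabilizers, exactly as in the repeater analysis of Section~\ref{sec:graphstates}.

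First I would establish condition~1, that the network produces $k$ Bell pairs. For each logical index $i$ I would construct two main stabilizers, one reducing to $X_{a_i}Z_{b_i}$ and one to $Z_{a_i}X_{b_i}$ on $A\cup B$ after the $X$-measurements and by-product corrections. The first is an $X$-chain that enters at $a_i$, crosses to $a_i'$ along the weight-$\mathrm{wt}(\bar X_i)$ edge of $\Gamma_1$, spreads over the memories $Q$ according to the $i$-th row of $\Gamma_X$ so as to realize the logical operator $\bar X_i$, and terminates at $b_i$ through the mirror blocks $\Gamma_1$ and $-\Gamma_X$ via $B'$ and $S'$; the second is its $X\leftrightarrow Z$ partner. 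The content of this step is to check that every $Z$-operator produced on the intermediate qudits of $C$ cancels. This is where the weights enter: the diagonal entries $\mathrm{wt}(\bar X_i)$ of $\Gamma_1$ and the correction block $(\Gamma_S)_{ij}=\sum_l (H_X)_{il}(H_Z)_{jl}$ are arranged precisely so that the $Z$-contributions coming from the $H_Z$-edges between $Q$ and $S,S'$ are compensated, using that the generators of $\mathcal{C}$ commute (equivalently, that the symplectic products carried by $H_XH_Z^T$ are accounted for). Once the main stabilizers are found, replacing each measured $X_v$ by its outcome turns them into $X_{a_i}Z_{b_i}$ and $Z_{a_i}X_{b_i}$ up to by-product operators, which is condition~1 and simultaneously delivers the $k$ Bell pairs enabling teleportation from $A$ to $B$.

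Next I would treat condition~2, the distance. Here I would determine the group of $X$-chains supported on $C$ and apply the map $s_i\mapsto\tilde s_i$ described after Definition~\ref{def:networkerrorcorrection}, under which an $X$ on a memory $q_j$ becomes an $X$ on physical qudit $j$ of $\mathcal{C}$, while an $X$ on a successor in $\Out(q_j)$ becomes a $Z$ on qudit $j$ through the $C_Z$-propagation. The goal is to show that the resulting generators $\tilde s_i$ are exactly the stabilizer generators of $\mathcal{C}$: the $X$-type checks are the chains living directly on $Q$ and are fixed by $H_X$, whereas the $Z$-type checks arise from chains on the successors $S,S'$ and are fixed by $H_Z$ (just as $s_1\mapsto\tilde s_1=XXXX$ and $s_2\mapsto\tilde s_2=ZZ^{-1}Z^{-1}Z$ in the $[[4,2,2]]$ example of FIG.~\ref{fig:schubertcode}). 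Since that map preserves commutation with any propagated error on $Q$, the network code and $\mathcal{C}$ have identical sets of undetectable errors, so the smallest weight of an error on $Q$ commuting with all $X$-chains equals the distance $d$ of $\mathcal{C}$. Condition~3 is then immediate, because each main stabilizer from Step~1 realizes a logical $\bar X_i$, which is supported on $Q$ by construction and hence acts nontrivially on $Q$ (or on $\Out(Q)$ after propagation).

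I expect the main obstacle to be the $Z$-cancellation bookkeeping in Step~1 together with the exact identification of the $X$-chain group in Step~2; both hinge on the interplay of the $\Gamma_X$, $H_Z$ and $\Gamma_S$ blocks and on the chosen edge weights, so that verifying the cancellations amounts to re-deriving the commutation relations of $\mathcal{C}$ inside the graph. I would also use that $D$ is prime, so that $\mathds{F}_D$ is a field: this guarantees that the chain group has the expected number of independent generators (no zero divisors obstruct the elimination that isolates the $\tilde s_i$) and that the co-primality hypothesis is automatically met whenever Theorem~\ref{thm:simplification} is invoked to reduce the pre-measurement graph to a network-compatible form.
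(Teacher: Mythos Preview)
Your overall strategy---exhibit main stabilizers for the Bell pairs, identify the $X$-chain group on $C$, and map it to the stabilizer of $\mathcal{C}$ via the rule after Definition~\ref{def:networkerrorcorrection}---is exactly what the paper does. However, your concrete guesses for the shape of those operators are off, and if you followed them literally you would not find the right cancellations.

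In the paper the three families are written down explicitly and are much simpler than you anticipate:
\begin{align*}
S_{X_i} &= X_{a_i}\,X_{b_i}\,\prod_j X_{q_j}^{x_{ij}},\\
S_{Z_i} &= Z_{a_i}\,Z_{b_i}\,X_{a_i'}\,X_{b_i'},\\
S_i     &= X_{s_i}\,X_{s_i'}\,\prod_j X_{q_j}^{(H_X)_{ij}}.
\end{align*}
So the first main stabilizer does \emph{not} cross through $a_i'$, $B'$ or $S'$ as an $X$-chain; it is simply the product $g_{a_i}g_{b_i}\prod_j g_{q_j}^{x_{ij}}$, and the only nontrivial $Z$-cancellation one must check is on the $s_j$'s, where it reduces to $\sum_l x_{il}(H_Z)_{jl}=0$, i.e.\ commutation of $\bar X_i$ with the code stabilizers. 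The second main stabilizer is not the ``$X\leftrightarrow Z$ partner'' of the first: it avoids $Q$ entirely and lives on $a_i,b_i,a_i',b_i'$. After the $X$-measurements these project to $X_{a_i}X_{b_i}$ and $Z_{a_i}Z_{b_i}$ (Bell pair up to a local $H$), not directly to $X_{a_i}Z_{b_i}$ and $Z_{a_i}X_{b_i}$. Finally, there is a \emph{single} family $S_i$ of error-detecting $X$-chains, not separate $X$-type and $Z$-type families: each $S_i$ carries $X$'s on $Q$ with exponents $(H_X)_{ij}$ and $X$'s on $s_i,s_i'$, and the $\Gamma_S$ block is chosen precisely so that the $Z$'s on $S$ and $S'$ cancel, i.e.\ $(\Gamma_S)_{ij}=\sum_l(H_X)_{il}(H_Z)_{jl}$. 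Under the $s\mapsto\tilde s$ map this single $S_i$ becomes $\prod_j X_j^{(H_X)_{ij}}Z_j^{(H_Z)_{ij}}$, the $i$-th generator of $\mathcal{C}$, which gives the distance $d$ and condition~3 at once.

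In short: right plan, but the actual stabilizers are shorter and more symmetric than your path-through-the-graph picture suggests; writing them down first and then verifying the $Z$-cancellations (as the paper does) is less error-prone than trying to trace a walk.
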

The proof is done by identifying main stabilizer operators and is given in Appendix~\ref{app:stabcodes}. The idea of the construction of Theorem~\ref{thm:stabcodes} is illustrated in FIG.~\ref{fig:stabcodes} for the 7-qubit Steane code, the five-qubit code and a [[12,2,3]] low-density parity check code taken from~\cite{LidarBrunQEC13}. The error correction in the network variant is done analogously to the stabilizer code. For example in FIG.~\ref{fig:fivequbits}, the $X$-chains
\begin{equation}
\begin{aligned}
s_1=& X_{q_1} X_{s_1}X_{s_1'} X_{q_4},\\
s_2=& X_{q_2} X_{s_2}X_{s_2'} X_{q_5},\\
s_3=& X_{q_1} X_{q_3} X_{s_3}X_{s_3'},\\
\text{and } s_4=& X_{q_2} X_{q_4} X_{s_4}X_{s_4'}
\end{aligned}
\end{equation}
correspond to the generators
\begin{equation}
\begin{aligned}
\tilde{s}_1=& X\otimes Z\otimes Z\otimes X \otimes\1,\\
\tilde{s}_2=& \1 \otimes X\otimes Z\otimes Z\otimes X,\\
\tilde{s}_3=& X\otimes \1 \otimes X \otimes Z \otimes Z,\\
\text{and } \tilde{s}_4=& Z\otimes X\otimes \1\otimes X\otimes Z
\end{aligned}
\end{equation}
in the five-qubit code.
 \begin{figure}[tbp] %
 \subfigure[{7-qubit Steane code ([[7,1,3]])~\cite{LidarBrunQEC13}.}]{\includegraphics{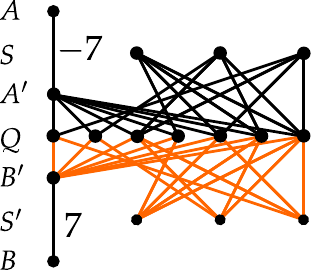}}\hspace{1cm}%
 \subfigure[{5-qubit code ([[5,1,3]])~\cite{LidarBrunQEC13}.}]{\includegraphics{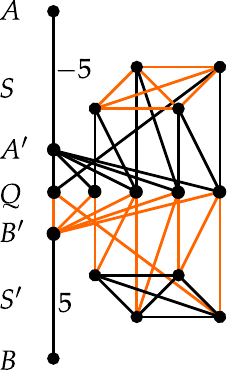}\label{fig:fivequbits}}\\ %
 \subfigure[{A $[[12,2,3]]$ QNC~\cite[p.~298]{LidarBrunQEC13}.}]{\includegraphics{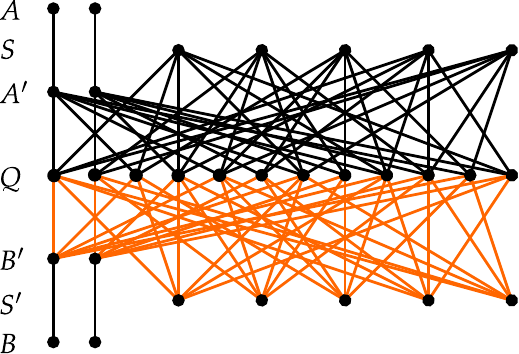}}
 \caption{Three examples for Theorem~\ref{thm:stabcodes}. By measuring all qudits except at A and B in $X$-basis, $k$ Bell-pairs shared by A and B are obtained (up to by-product operators). The distribution scheme can tolerate two errors (i.e. node outages) on the $n$ ``memories''-qudits $Q$. \label{fig:stabcodes}}%
\end{figure}\\%
Please note that the QNC described in Theorem~\ref{thm:stabcodes} is not necessarily the simplest one implementing the stabilizer error correction code in a network (see also Theorem~\ref{thm:simplification}). The QNC corresponding to the nine-qubit-Shor code obtained from Theorem~\ref{thm:stabcodes} differs from the example shown in Figure~\ref{fig:shorcode}.
\FloatBarrier
\section{Conclusion and outlook}\label{sec:conclusion}
In this paper we described how graph state repeater with vertex degree larger than two, which we call quantum router, can employ network coding. It is well known that this measurement based quantum computation at intermediate network sites can increase the throughput of a network.\\
We used the framework of graph state repeaters to generalize the construction of QNCs from linear codes of \cite{deBeaudrap14} to the distribution of general graph states associated with two-colorable graphs. Furthermore we analysed the robustness of QNC, i.e. its error correction capabilities, by mapping them to stabilizer codes.\\
The formalism developed in this work can be applied to investigate further interesting questions in this context. We plan to apply the techniques developed to simplify the quantum network code to the simplification of gates in measurement-based quantum computation, e.g. the SWAP gate that corresponds to the butterfly network.\\
It will be interesting to consider network coding in random graphs and compare it to classical random network coding, which proved to be very efficient, e.g. if the network layout is unknown.\\
A further related project is the generalization of the presented scheme to the distribution of graph states associated with non-two-colorable graphs.\\
The advantage of the distribution of multipartite entanglement via routers with network coding for quantum key distribution, as measured by the secret key rate, compared to bipartite QKD protocols, deserves further quantitative investigations.

\bibliographystyle{apsrev4-1} 
%

\appendix
\section{By-product operators on graph states}
\label{app:byproduct}
The $X$-measurements on the intermediate qudits introduce phase factors in the stabilizer operators according to the measurement outcomes. The aim of the discussed protocols is, however, to produce the same final state independent of the random measurement outcomes at intermediate sites. It is therefore necessary to correct for these unwanted factors, either by physically applying unitary operations on the unmeasured qudits or by performing these corrections on the classical data after measurements on the final state (if possible). These corrections are called by-product operators.\\
The phase factors introduced in the main stabilizer operators by $X$-measurements on intermediate qudits can always be corrected. Suppose, for example, the stabilizer operator $S_A=\eulere^{x \frac{2\pi \ramuno}{D}} X_A Z_B$ that accumulated the unwanted phase $\eulere^{x \frac{2\pi \ramuno}{D}}$ with $x\in \mathds{N}$. Because $Z^n X^m=\eulere^{mn \frac{2\pi\ramuno}{D}} X^m Z^n $ for any $m,n\in\mathds{N}$, the by-product operator $Z_A^{-x}$ can be applied to cancel out the phase $\eulere^{x\frac{2\pi \ramuno}{D}}$ that depends on the measurement outcomes.\\
Additionally, local basis changes might be necessary to obtain a graph state in the standard form. Consider, for example, the graph state \raisebox{0.15\baselineskip}{\includegraphics{tandemidentitya}} stabilized by $X_1 X_3^{-1}$ and $Z_1 X_2 Z_3$. The $X_2$-measurement leads to $X_1 X_3^{-1}$ and $Z_1 Z_3$ (already phase-corrected). The local basis change $H_3^\dagger$ brings these stabilizer operators into standard form $X_1 Z_3$ and $Z_1 X_3$.
\section{Measurement-based computation with a two-colorable graph state}\label{app:inputoutputstate}
Consider the qudits $A$ to be in a computational basis state $\ket{\vec{j}}$.
The gate from $A_x$ to $B_y$ acts as $Z^{j^{(x)} K_{yx}}$ on $B_y$.
The state after application of the $C_Z$ gates is
\begin{align*}
\ket{\vec{j}}\bigotimes_y \prod_x Z^{j^{(x)} K_{yx}} \ket{+}_y
=& \ket{\vec{j}}\bigotimes_y Z^{\sum_x K_{yx} j^{(x)}}\ket{+}_y.\\
=& \ket{\vec{j}}\bigotimes_y \ket{\vec{j}} H^2 Z^{\sum_x K_{yx} j^{(x)}} H^2 \ket{+}_y\\
=&\ket{\vec{j}}\bigotimes_y H X^{\sum_x K_{yx} j^{(x)}} \ket{0}_y\\
=&\ket{\vec{j}}\bigotimes_y H \ket{ (K\vec{j})_y }\\
=&\ket{\vec{j}} \otimes H^{\otimes m} \ket{K\vec{j}}.\\
\end{align*}
This carries over to the superposition $\ket{\psi_A}=\sum_{i=0}^{d^n-1} \braket{\vec{i}}{\psi_A} \ket{\vec{i}}$, i.e.
\begin{equation}
 \sum_j \braket{\vec{j}}{\psi_A} \ket{\vec{j}} \otimes H^{\otimes m} \ket{K\vec{j}}.
\end{equation}
Now we express $\ket{\vec{j}}$ in the $X$-basis as
\begin{equation}
 \ket{j}=\frac{1}{\sqrt{d}}\sum_{y} \eulere^{-\frac{2\pi \ramuno}{d} y j} H\ket{y}
\end{equation}
i.e.
\begin{equation}
 \ket{\vec{j}}=\frac{1}{\sqrt{d^n}}H^{\otimes n}\sum_{y=0}^{d^n-1} \eulere^{-\frac{2\pi \ramuno}{d} \vec{y}\cdot \vec{j}} \ket{\vec{y}}
\end{equation}
and get
\begin{align}
 &\sum_j \braket{\vec{j}}{\psi_A} \ket{\vec{j}} \otimes H^{\otimes m} \ket{K\vec{j}}\\
 =&\sum_j \braket{\vec{j}}{\psi_A} \frac{1}{\sqrt{d^n}}H^{\otimes n}\sum_{y} \eulere^{-\frac{2\pi \ramuno}{d}\vec{y}\cdot \vec{j}} \ket{\vec{y}} \otimes H^{\otimes m} \ket{K\vec{j}}\\
 =&\frac{1}{\sqrt{d^n}}  \sum_{y}  H^{\otimes n} \ket{\vec{y}} \otimes H^{\otimes m} \sum_j \braket{\vec{j}}{\psi_A}  \eulere^{-\frac{2\pi \ramuno}{d} \vec{y}\cdot \vec{j}} \ket{K\vec{j}}
\end{align}
The by-product operator
\begin{equation}
\sum_{\vec{i}=K\vec{j}} \eulere^{\frac{2 \pi \ramuno}{d} \vec{y}\cdot\vec{j}} H^{\otimes m}\ket{\vec{i}}\bra{\vec{i}} H^{\otimes m}
\end{equation}
can be constructed from $X$-operations, where one uses that $K$ is injective. See also \cite{Beaudrap14}.
\section{Construction for linear network codes}\label{app:fromlinearcodes}
Without loss of generality we assume that each packet received by a node is required for the calculation of some outgoing packets. If this was not the case, one could simplify the network coding by not sending that packet.
We start by constructing the graph $G'$ that defines the pre-measurement graph state $\ket{G'}$.
The construction is the same as described in \cite{Beaudrap14}, i.e. every node is replaced by the graph corresponding to the local encoding kernel. More explicitly, for each network node $v\in V_{cl.}$ we insert $|\In(v)|+|\Out(v)|$ associated vertices into $V'$, which we denote $V'(v)=\{v_{in,1},...,v_{in,|\In(v)|},v_{out,1},...,v_{out,|\Out(v)|}\}$.  The local adjacency matrix is given by
\begin{equation}
 \Gamma_v=\left(\begin{array}{cc}
                                0 & K_v^T\\
                                K_v & 0
                               \end{array}\right),
\end{equation}
see also Definition~\ref{def:GK}. Note that our assumption on the linear network coding implies that no row of $\Gamma_v$ is zero. For each edge $e=(a,b)\in E$ we add one undirected edge $(a_{out,i},b_{in,j})$ of weight $D-1$ to $E'$, where $i$ and $j$ fulfil $\Out(a)_i=b$ and $\In(b)_j=a$, respectively. See also FIG.~\ref{fig:graphextension}. The graph state $\ket{G'}$ can be distributed within the rate constraints of the edges $E$, because for each edge in $E$ there exists a single edge $e'=(a_{out,i},b_{in,j})\in E'$. For large-scale quantum networks, graph repeaters can be employed. Remember that the additional classical communication required for a repeater network is considered to be free in the present context.\\
We make use of a converse network coding on the graph $G^T=(V_{cl.},E_{cl.}^T)$ with $E_{cl.}^T=\{(b,a)|(a,b)\in E_{cl.}\}$ given by the local coding kernels $K_v^T$ and global encoding kernels $\vec{g}_e$. Note that this changes the dimension $|A|$ of the network to $|B|$. We denote the source qudits $A_i$ ($i=1,2,...,|A|$) and the sink qudits $B_{\tilde{i}}$ ($\tilde{i}=1,2,...,|B|$). The canonical basis vectors associated with the symbol send by $B_{\tilde{i}}$ in the converse coding are denoted by $\vec{e}_{\tilde{i}}$. The matrix $F$, that contains the global encoding kernels of the sink edges and represents the net effect of the network, transposes when going to the converse code.\\
To project the state $\ket{G'}$ onto the desired state $\ket{G}$, all qudits except the source (in-) and the sink (out-) qudits, i.e. the intermediate qudits $C$, are measured in $X$-basis. Note that $X$ can be measured even though it is not Hermitian and from this also the measurement outcomes of any powers of $X$ are known. The main-stabilizer operators of the QNC are constructed from the global encoding kernels. They read
\begin{align}
 S_{A_i} =& g_{A_i} \prod_{e=([\In(b)]_j,b)\in E_{cl.}} g_{b_{in,j}}^{(\vec{f}_e)_i} \label{eq:SAi1}\\
         =& X_{A_i} \prod_{\tilde{i}} Z_{B_{\tilde{i}}}^{F_{\tilde{i}i}} \prod_{e=([\In(b)]_j,b)\in E_{cl.}} X_{b_{in,j}}^{(\vec{f}_e)_i} \label{eq:SAi2}
\intertext{and}
 S_{B_{\tilde{i}}} =& g_{B_{\tilde{i}}} \prod_{e=(a,[\Out(a)]_j)\in E_{cl.}} g_{a_{out,j}}^{(\vec{g}_e)_{\tilde{i}}}\label{eq:SBim1}\\
                   =& X_{B_{\tilde{i}}} \prod_{i} Z_{A_i}^{F_{\tilde{i}i}} \prod_{e=(a,[\Out(a)]_j)\in E_{cl.}} X_{a_{out,j}}^{(\vec{g}_e)_{\tilde{i}}}  .\label{eq:SBim2}
\end{align}
\begin{proof}
We start by proving
\begin{align}
 S_{A_i} =& g_{A_i} \prod_{e=([\In(b)]_j,b)\in E_{cl.}} g_{b_{in,j}}^{(\vec{f}_e)_i}\label{eq:SAi1app}\\
         =& X_{A_i}\prod_{\tilde{i}} Z_{B_{\tilde{i}}}^{F_{\tilde{i}i}} \prod_{e=([\In(b)]_j,b)\in E_{cl.}} X_{b_{in,j}}^{(\vec{f}_e)_i}\label{eq:SAi2app}
\end{align}
for a fixed but arbitrary $i\in\{1,2,...,|A|\}$. While the $X$-operators in Eq.~(\ref{eq:SAi2app}) are clear, we need to show that exactly the given $Z$-operators appear in the stabilizer. Furthermore it is clear, that no $Z$ operator can occur on a vertex $v_{in,j}\in V'$. Therefore consider any vertex $c=v_{out,j}\in V'$. Let $e$ denote the $j$-th edge going out of $v$. %
\begin{figure}[htbp]%
\centering\includegraphics[width=\linewidth]{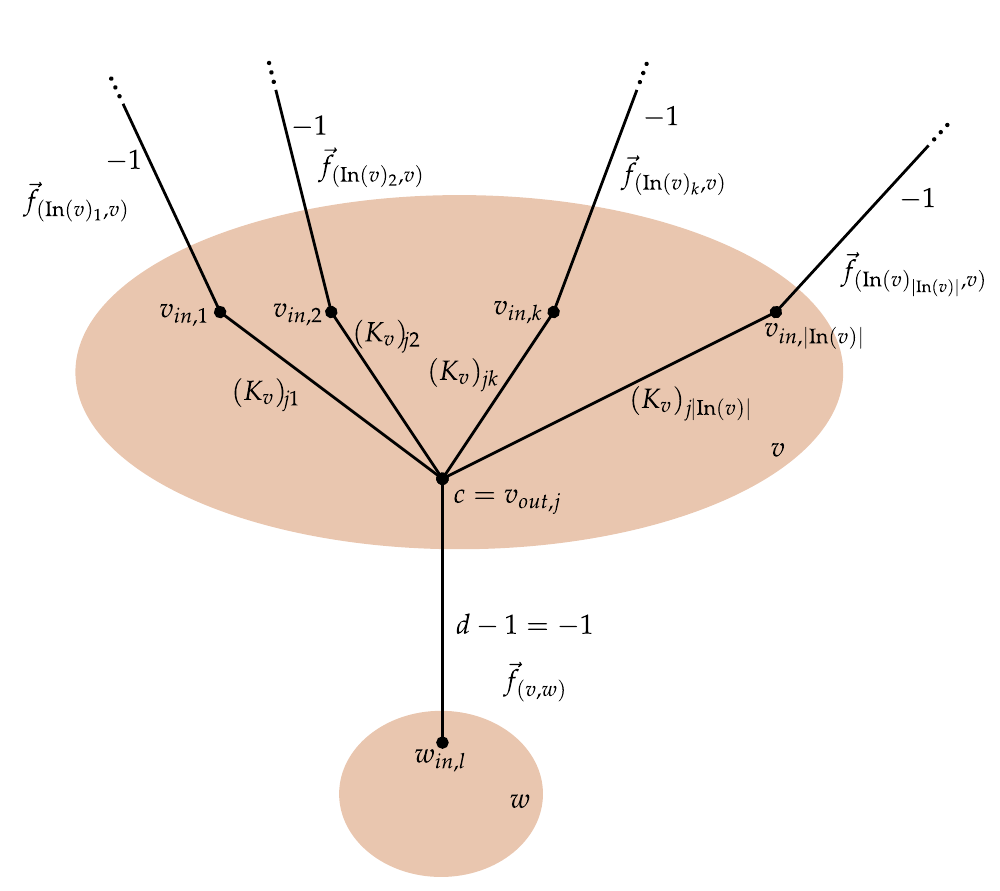}%
\caption{Situation considered in the derivation of the main stabilizers.}%
\end{figure}%
In the following we note that the calculation of the exponent of $Z_c$ in Eq.~(\ref{eq:SAi2app}) is analogous to the calculation of $(\vec{f}_e)_i$ in the classical linear code:\\
Remember that the global encoding kernel of the $j$-th outgoing edge is calculated from the global encoding kernels of the incoming edges and the local encoding kernel via
\begin{equation}
(\vec{f}_e)_i = \sum_{k=1}^{|\In(v)|} (K_v)_{jk} (\vec{f}_{(\In(v)_k,v)})_i \mod d.
\end{equation}
This calculation is done modulo $d$. Now in the QNC the operator on the $k$-th parent of $c$ in Eq.~(\ref{eq:SAi1app}) is $g_{v_{in,k}}^{(\vec{f}_{(\In(v)_k,v)})_i}$. Because the edges in $E'$ to $c$ are weighted according to $K_v^T$, the exponent $h$ of $Z_b^h$ is
\begin{equation}
h=\sum_{k=1}^{|\In(v)|} (K_v)_{jk} (\vec{f}_{(\In(v)_k,v)})_i \mod d=(\vec{f}_e)_i.
\end{equation}
We used that $Z^d=\mathds{1}$. So the stabilizer generators centred on the parents of $c$ in Eq.~(\ref{eq:SAi1app}), $\prod_{k=1}^{|\In(v)|} g_{v_{in,k}}^{(\vec{f}_{(\In(v)_k,v)})_i}$, contribute $Z_c^{(\vec{f}_e)_i}$ on this qudit $c$.
We distinguish two cases.
\begin{enumerate}
\item $c$ is the $\tilde{i}$-th sink qudit. Then the outgoing edge $e$ is imaginary, i.e. $(\vec{f}_e)_i=F_{\tilde{i}i}$ and the operator on $c$ reads $Z_{B_{\tilde{i}}}^{F_{\tilde{i}i}}$.
\item $c$ is an intermediate qudit. Let $w$ be the child of $v$ in $G$ and let $e=(v,w)\in E$. Furthermore let $w_{in,l}$ be the qudit associated with $w$ that is connected to $c$, i.e. $(c,w_{in,l})\in E'$. Then the operator $Z_c^{(\vec{f}_e)_i}$ obtained from the parents of $v$ is cancelled by the action of the operator $g_{w_{in,l}}^{(\vec{f}_e)_i}$. This is due to the edge weight $D-1=-1$ of $(c,\Out(v)_j)\in E'$.
\end{enumerate}
This finishes the proof of Eq.~(\ref{eq:SAi2app}). We continue by showing that
\begin{align}
 S_{B_{\tilde{i}}} =& g_{B_{\tilde{i}}} \prod_{e=(a,[\Out(a)]_j)\in E_{cl.}} g_{a_{out,j}}^{(\vec{g}_e)_{\tilde{i}}}\label{eq:SBim1app}\\
         =& X_{B_{\tilde{i}}} \prod_i Z_{A_i}^{F_{\tilde{i}i}} \prod_{e=(a,[\Out(a)]_j)\in E_{cl.}} X_{a_{out,j}}^{(\vec{g}_e)_{\tilde{i}}}  .\label{eq:SBim2app}
\end{align}
In the converse network coding on $G_{cl.}^T$, source and sinks exchange their role. The same reasoning as above can be applied to prove Eq.~(\ref{eq:SBim2app}).
\end{proof}
Denote the measurement outcome of the intermediate node $w$ as $\eulere^{x_w 2\pi \ramuno/D}$. The operator $X_w$ in Eqs.~(\ref{eq:SAi2}) and (\ref{eq:SBim2}) is replaced by $\eulere^{x_w 2\pi \ramuno/D}$. By this $S_v$ accumulates a phase $\eulere^{\xi_v 2\pi \ramuno/D}$ with
\begin{equation}
 \xi_v=\left\{\begin{aligned}
\sum_{e=([\In(b)]_j,b)\in E}  &x_{b_{in,j}} (\vec{f}_e)_i & & \text{if }v=A_i\\
\sum_{e=(a,[\Out(a)]_j)\in E} &x_{a_{out,j}} (\vec{g}_e)_i & & \text{if }v=B_{\tilde{i}}
       \end{aligned}\right.
\end{equation}
and a possible choice of the corresponding by-product operator is $Z_v^{(D-1)\xi_v}$. One sees that $\ket{G}$ has been produced by comparing its stabilizer generators with the main stabilizers.\\

\section{Proof of Theorem~\ref{thm:simplification}}\label{app:proofofsimplification}
\setcounter{theorem}{1}
\begin{theorem}[Network simplification]\label{thm:simplificationapp}
 Let $G=(V,E)$ with adjacency matrix $\Gamma$ be given. Denote the neighbourhood of $v\in V$ as $N_v=\{w\in V|\Gamma_{vw}\neq 0\}$.
 Let $a,b\in V$ be neighbours (i.e. $b\in N_a$), s.t. $N_a\cap N_b=\{\}$ and $\Gamma_{ab}$ and $D$ are co-prime. Via a measurement of $a$ in $X$-basis, the state $\ket{G}$ is projected onto $\ket{G'}$ (up to by-product operators) with adjacency matrix $\Gamma'$, which equals $\Gamma$ except for
\begin{align}
 \Gamma'_{cb}=&\frac{\Gamma_{ac}}{\Gamma_{ab}},\\
 \Gamma'_{cd}=&-\frac{\Gamma_{bd}\Gamma_{ac}}{\Gamma_{ab}}\\
 \text{and }\Gamma'_{ab}=&\Gamma'_{ac}=\Gamma'_{bd}=0,
\end{align}
where $c\in N_a\backslash \{b\}$ and $d\in N_b\backslash \{a\}$.
\end{theorem}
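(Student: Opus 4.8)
The plan is to prove the statement with the \emph{main stabilizer} technique used throughout the paper: I would produce a generating set of the stabilizer of $\ket{G}$ in which exactly one generator fails to commute with $X_a$, discard that generator under the measurement, replace $X_a$ by the outcome eigenvalue $\alpha$ in every surviving generator, trace out $a$, and finally read off the remaining operators as the generators of $\ket{G'}$ up to a single local basis change and outcome-dependent phases (the by-product operators of App.~\ref{app:byproduct}).

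First I would fix notation for the neighbourhoods, letting $c$ range over $N_a\setminus\{b\}$ and $d$ over $N_b\setminus\{a\}$. The hypothesis $N_a\cap N_b=\{\}$ guarantees that no vertex is adjacent to both $a$ and $b$, so $\Gamma_{cb}=\Gamma_{da}=0$. Among the generators $g_v=X_v\prod_w Z_w^{\Gamma_{vw}}$ of Eq.~\eqref{eq:graphgenerators}, the only ones carrying $Z_a$ — hence the only ones failing to commute with $X_a$ — are $g_b$ (with $Z_a^{\Gamma_{ab}}$) and the $g_c$ (with $Z_a^{\Gamma_{ac}}$); the generator $g_a$ carries $X_a$ but no $Z_a$, and every $g_d$ and every generator on an outside vertex already commutes with $X_a$.

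Next I would pivot on $g_b$. Co-primality of $\Gamma_{ab}$ and $D$ makes $\Gamma_{ab}$ invertible modulo $D$, so $g_b^{-\Gamma_{ac}/\Gamma_{ab}}$ is well defined, and replacing each $g_c$ by $\tilde g_c:=g_c\,g_b^{-\Gamma_{ac}/\Gamma_{ab}}$ cancels its $Z_a$-exponent. This yields a generating set $\{g_a,g_b,\tilde g_c,g_d,\dots\}$ in which $g_b$ is the unique non-commuting element. Discarding $g_b$ and substituting $X_a\mapsto\alpha$ then leaves, from $g_a$, the pure-$Z$ operator $\alpha\,Z_b^{\Gamma_{ab}}\prod_c Z_c^{\Gamma_{ac}}$, whose $\Gamma_{ab}$-th root is $Z_b\prod_c Z_c^{\Gamma_{ac}/\Gamma_{ab}}$; and from $\tilde g_c$ the operator $X_c\,X_b^{-\Gamma_{ac}/\Gamma_{ab}}\prod_d Z_d^{\Gamma_{cd}-\Gamma_{ac}\Gamma_{bd}/\Gamma_{ab}}\prod_{c'\neq c}Z_{c'}^{\Gamma_{cc'}}\cdots$, where I used $\Gamma_{cb}=0$. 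Applying the Fourier transform $H_b$ (the basis-change by-product), which sends $X_b\mapsto Z_b$ and $Z_b\mapsto X_b^{-1}$, turns these two families into exactly the graph generators $g'_b$ and $g'_c$ of a state in which $a$ is isolated ($\Gamma'_{ab}=\Gamma'_{ac}=0$), the $c$--$b$ weight is $\Gamma_{ac}/\Gamma_{ab}$, and the $c$--$d$ weight is shifted by $-\Gamma_{ac}\Gamma_{bd}/\Gamma_{ab}$ — precisely the claimed $\Gamma'$.

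The step needing the most care is the family $g_d$, which the measurement leaves untouched and which still carries $Z_b^{\Gamma_{bd}}$, although $\Gamma'_{bd}=0$. Rather than matching generators one-to-one I would verify equality of the two stabilizer groups: multiplying $g_d$ by the $\Gamma_{bd}$-th power of the $g_a$-derived operator $Z_b\prod_c Z_c^{\Gamma_{ac}/\Gamma_{ab}}$ removes the $Z_b^{\Gamma_{bd}}$ factor and at the same time converts each $Z_c^{\Gamma_{cd}}$ into $Z_c^{\Gamma'_{cd}}$, reproducing $H_b g'_d H_b^\dagger$. Keeping these $Z$-exponents on the vertices $d$ consistent, and checking that no unwanted edges are generated among the $c$'s or towards outside vertices, is the main bookkeeping obstacle; once it is settled, the post-measurement state equals $H_b\ket{G'}$ up to the outcome-dependent phases, and absorbing $H_b$ together with those phases into the by-product operators gives $\ket{G'}$, as claimed.
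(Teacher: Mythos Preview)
Your proposal is correct and follows essentially the same route as the paper: pivot on $g_b$ to make each $g_c$ commute with $X_a$, use (a power of) $g_a$ to strip $Z_b$ from each $g_d$, replace $X_a$ by the outcome, and undo the residual Fourier on $b$ to read off $\Gamma'$. The only cosmetic differences are that the paper premultiplies $g_d\mapsto g_d\,g_a^{-\Gamma_{bd}/\Gamma_{ab}}$ \emph{before} the measurement (your post-measurement group-equality clean-up is equivalent) and conjugates by $H_b^{\dagger}$ (so $Z_b\mapsto X_b$) rather than $H_b$; with your convention you would obtain the inverse of $g'_b$, and you need the $(-\Gamma_{bd})$-th rather than the $\Gamma_{bd}$-th power of the $g_a$-derived operator to cancel $Z_b^{\Gamma_{bd}}$ --- trivial sign fixes that do not affect the argument.
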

\begin{proof}
\begin{figure}
\subfigure[pre-measurement]{\includegraphics{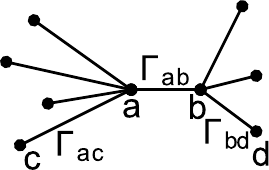}}\hspace{1cm}
\subfigure[post-measurement]{\includegraphics{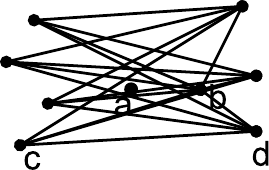}}
\caption{Sketch of the graph in Theorem~\ref{thm:simplification} with vertices a, b, c, and d.}\label{fig:theorem2sketch}
\end{figure}
The notation is illustrated in FIG.~\ref{fig:theorem2sketch}. The pre-measurement graph state $\ket{G}$ is stabilized by the main stabilizer operators
\begin{align}
 g_a =&X_a\prod_{c\in N_a\backslash\{b\}} Z_c^{\Gamma_{ac}} Z_b^{\Gamma_{ab}},\\
 \intertext{$\forall c\in N_a\backslash\{b\}:$}
 g_c g_b^{-\frac{\Gamma_{ac}}{\Gamma_{ab}}} =&X_c\prod_{i\in N_c\backslash\{a\}} Z_i^{\Gamma_{ic}} X_b^{-\frac{\Gamma_{ac}}{\Gamma_{ab}}} \prod_{d\in N_b\backslash\{a\}} Z_d^{-\frac{\Gamma_{bd}\Gamma_{ac}}{\Gamma_{ab}}}\\
 \intertext{and $\forall d\in N_b\backslash\{a\}:$}
 g_d g_a^{-\frac{\Gamma_{bd}}{\Gamma_{ab}}} =& X_d\prod_{i\in N_d\backslash\{b\}} Z_i^{\Gamma_{id}} X_a^{-\frac{\Gamma_{bd}}{\Gamma_{ab}}} \prod_{c\in N_a\backslash\{b\}} Z_c^{-\frac{\Gamma_{bd}\Gamma_{ac}}{\Gamma_{ab}}}.
\end{align}
The calculation of the exponents is done modulo $D$. The exponents of $g_b$ and $g_a$ are chosen such that the $Z$ operators on $a$ and $b$, respectively, cancel out. $\Gamma_{ab}$ can be inverted if and only if $\Gamma_{ab}$ and $D$ are co-prime. Now apply $H_b^\dagger$ on the state (i.e. $H_b^\dagger (\cdot) H_b$ on a stabilizer operator) and replace $X_{a}$ by $1$ (i.e. do measurement and apply by-product operator) to obtain
\begin{align}
 g_b'=&X_b \prod_{c\in N_a\backslash\{b\}} Z_c^{\frac{\Gamma_{ac}}{\Gamma_{ab}}}, \\
 g_c'=&X_c Z_b^{\frac{\Gamma_{ac}}{\Gamma_{ab}}} \prod_{i\in N_c\backslash\{a\}} Z_i^{\Gamma_{ic}} \prod_{d\in N_b\backslash\{a\}} Z_d^{-\frac{\Gamma_{bd}\Gamma_{ac}}{\Gamma_{ab}}}\\
 \text{and }g_d'=&X_d\prod_{i\in N_d\backslash\{b\}} Z_i^{\Gamma_{id}} \prod_{c\in N_a\backslash\{b\}} Z_c^{-\frac{\Gamma_{bd}\Gamma_{ac}}{\Gamma_{ab}}}.
\end{align}
We used that $H^\dagger X H = Z^{-1}$ and $H^\dagger Z H=X$. From the exponents of the $Z$-operator one can read the weights of the post-measurement state.
\end{proof}
\section{Proof of Theorem~\ref{thm:stabcodes}}\label{app:stabcodes}
\begin{theorem}\label{thm:stabcodesapp}
	Given an $[[n,k,d]]$ stabilizer quantum error correction code $\mathcal{C}$ with parity check matrices $H_X$ and $H_Z$ in prime dimension $D$. Let the $i$-th logical $X$-operator be given by the numbers $x_{ij}\in \mathds{F}_D$, s.t.
	\begin{equation}
	\bar{X}_i=\prod_{j=1}^n X_j^{x_{ij}}.
	\end{equation}
	Denote $V=A \cup A' \cup B \cup B' \cup S \cup Q \cup S'$, with $A=\{a_1,a_2,...,a_k\}$, $A'=\{a_1',a_2',...,a_k'\}$, $B=\{b_1,b_2,...,b_k\}$, $B'=\{b_1',b_2',...,b_k'\}$, $S=\{s_1,s_2,...,s_{n-k}\}$, $S'=\{s_1',s_2',...,s_{n-k}'\}$, and $Q=\{q_1,q_2,...,q_n\}$.
	The network with vertices $V$ and adjacency matrix
	\begin{equation}
	\Gamma=\left(\begin{array}{ccccccc}
	& -\Gamma_1 & & & & & \\
	-\Gamma_1 &  & & & & \Gamma_X & \\
	& & & \Gamma_1 & & & \\
	& & \Gamma_1 & & &  & -\Gamma_X\\ 
	& &  & & -\Gamma_S & H_Z &\\ 
	& \Gamma_X &  &  & H_Z^T &  & -H_Z^T\\ 
	&  &  & -\Gamma_X &  & -H_Z & \Gamma_S
	\end{array}\right)
	\end{equation}
	with 
	\begin{align}
	(\Gamma_X)_{ij}=x_{ij},\\
	(\Gamma_1)_{ij}=&\delta_{ij} \mathrm{wt}(\bar{X}_i),\\
	\text{and }(\Gamma_S)_{ij}=&\sum_l (H_X)_{il} (H_Z)_{jl}
	\end{align}
	implements an $[[n,k,d]]$ QNC w.r.t. errors on $Q$. Here $\mathrm{wt}(\bar{X}_i)$ denotes the number of qudits on which the $i$-th logical $X$ operator acts non-trivially. The network produces $k$ Bell pairs and thus allows to teleport $k$ qudits from $A$ to $B$. 
\end{theorem}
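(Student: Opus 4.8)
The plan is to follow the two-part logic encoded in Definition~\ref{def:networkerrorcorrection}: first exhibit explicit main stabilizer operators showing that the $X$-measurements on $C=A'\cup B'\cup S\cup Q\cup S'$ leave $k$ Bell pairs on $A\cup B$ (condition~1), and then identify the $X$-chain group on $C$ and match it to the stabilizer group of $\mathcal{C}$ in order to pin down the code distance with respect to errors on $Q$ (conditions~2 and~3). As in Appendix~\ref{app:fromlinearcodes}, the whole argument is bookkeeping on products of the graph-state generators
\begin{equation}
g_v=X_v\prod_{w\in V}Z_w^{\Gamma_{vw}},
\end{equation}
read off from the stated block adjacency matrix; the essential input is that the defining data $\Gamma_1$, $\Gamma_X$, $\Gamma_S$ and $H_Z$ are arranged so that the unwanted $Z$-operators on the measured qudits cancel modulo $D$.

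For the Bell-pair part I would, for each logical index $i$, build two main stabilizers. Starting from $g_{a_i}=X_{a_i}Z_{a_i'}^{-\mathrm{wt}(\bar{X}_i)}$ one multiplies by the generators centred on $a_i'$, on the memories $q_j$ in the support of the $i$-th logical $X$ (with exponents $x_{ij}$), and on the auxiliary qudits $s_l,s_l'$, choosing the powers so that every $Z$ on $A'\cup S\cup Q\cup S'$ disappears. The diagonal weight $(\Gamma_1)_{ii}=\mathrm{wt}(\bar{X}_i)$ is precisely what cancels the self-$Z$ accumulated on $a_i'$ along the logical-$X$ path, the paired edges $Q$--$S$ and $Q$--$S'$ (weights $H_Z^T$ and $-H_Z^T$) cancel the $Z$'s produced on the memories, and the self-loops $\mp\Gamma_S=\mp H_XH_Z^T$ on $S$ and $S'$ absorb the residual $Z$'s arising from the overlap of the $X$- and $Z$-parts of the generators. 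The surviving operator acts as $X_{a_i}$ on $A$ and, through the mirror-image blocks on the $B$-side, as $Z_{b_i}$ on $B$; replacing each $X_c$ ($c\in C$) by its measurement phase and applying the $Z$ by-product operators (as in Appendix~\ref{app:byproduct}) leaves $X_{a_i}Z_{b_i}$. A second, complementary product that threads the $Z$-stabilizer structure yields $Z_{a_i}X_{b_i}$, and together these are the stabilizer generators of the $i$-th Bell pair, establishing condition~1.

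For the distance I would compute the $X$-chain group on $C$, i.e.\ the $Z$-free stabilizers supported on the measured qudits. The natural generators are $s_r=\prod_j X_{q_j}^{(H_X)_{rj}}\,X_{s_r}X_{s_r'}$, one for each of the $n-k$ rows of $H_X$: the memory part contributes the $X$-support $\prod_j X_j^{(H_X)_{rj}}$, while $X_{s_r}$ (connected to the memories with weights $(H_Z)_{rj}$) propagates through the $C_Z$ gates to $\prod_j Z_{q_j}^{(H_Z)_{rj}}$, the $S'$ partner and the $\Gamma_S$ self-loops again serving to annihilate the leftover $Z$'s. Under the map $s_r\mapsto\tilde{s}_r$ of Definition~\ref{def:networkerrorcorrection} these are exactly the stabilizer generators $\prod_j X_j^{(H_X)_{rj}}\prod_j Z_j^{(H_Z)_{rj}}$ of $\mathcal{C}$, as illustrated for the five-qubit code in FIG.~\ref{fig:fivequbits}. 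Since error propagation on $Q$ preserves commutation with the chains, the general correspondence stated after Definition~\ref{def:networkerrorcorrection} transfers the detection and correction behaviour of $\mathcal{C}$ verbatim to the network, so the code distance with respect to errors on $Q$ is again $d$; condition~3 holds because each Bell main stabilizer constructed above acts non-trivially on $Q$ or on $\Out(Q)$.

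The main obstacle is the $Z$-cancellation accounting in the Bell-pair step: it relies simultaneously on the symplectic consistency of the code, $H_XH_Z^T=H_ZH_X^T$ (commutation of the stabilizer generators), on $H_Z\vec{x}_i=0$ (each logical $X$ commutes with the $Z$-checks), and on the invertibility of the relevant weights, which is why $D$ is taken prime so that $\mathds{F}_D$ is a field. Verifying that the single self-loop weights $\Gamma_1$ and $\Gamma_S$ really suffice to clear all residual $Z$-operators for arbitrary (non-CSS) stabilizer codes, rather than only for the qubit examples shown, is the delicate point; everything else is a routine transcription of the classical construction behind Theorem~\ref{thm:codeconstruction}.
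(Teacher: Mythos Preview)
Your overall strategy---exhibit explicit main stabilizers witnessing Bell pairs, then identify the $X$-chain group on the measured qudits and match it to the stabilizer group of $\mathcal{C}$---is exactly the paper's approach, and your $X$-chain generators $s_r=X_{s_r}X_{s_r'}\prod_j X_{q_j}^{(H_X)_{rj}}$ coincide with the paper's $S_i$.

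Where you diverge is in the Bell-pair part. Your description routes the first main stabilizer through $a_i'$, the $s_l$, $s_l'$ and the $\Gamma_S$ blocks, aiming at $X_{a_i}Z_{b_i}$. This is both more complicated than necessary and not quite right as written: including $g_{a_i'}$ in the product injects a $Z_{a_i}^{-\mathrm{wt}(\bar X_i)}$ on the unmeasured qudit $a_i$, which you cannot cancel without spoiling the $X_{a_i}$ you want. The paper instead takes the much simpler products
\[
S_{X_i}=g_{a_i}\,g_{b_i}\,\prod_j g_{q_j}^{x_{ij}}=X_{a_i}X_{b_i}\prod_j X_{q_j}^{x_{ij}},\qquad
S_{Z_i}=g_{a_i'}\,g_{b_i'}=Z_{a_i}Z_{b_i}\,X_{a_i'}X_{b_i'},
\]
which after the $X$-measurements collapse to $X_{a_i}X_{b_i}$ and $Z_{a_i}Z_{b_i}$, i.e.\ Bell-pair stabilizers (the form $X_{a_i}Z_{b_i}$, $Z_{a_i}X_{b_i}$ of Definition~\ref{def:networkerrorcorrection} then appears after a local $H$ on $b_i$, as in Appendix~\ref{app:byproduct}). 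Neither of these two main stabilizers touches $S$ or $S'$, and the $\Gamma_S$ block plays no role here; the only cancellations needed are $\sum_j x_{ij}(H_Z)_{lj}=0$ (commutation of $\bar X_i$ with the checks) for $S_{X_i}$, and nothing at all for $S_{Z_i}$. The $\Gamma_S$ block is used solely to clear the $Z_{s_j}$ residues in the $X$-chain $S_i$, via $(\Gamma_S)_{ij}=\sum_l (H_X)_{il}(H_Z)_{jl}$.

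So: keep your outline and your treatment of the $X$-chain/distance part, but replace your convoluted Bell-pair construction by the two short products above and verify the single relation $H_Z\vec x_i=0$. That is the whole proof.
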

\begin{proof}
The pre-measurement graph state $\ket{G}$ is stabilized by
\begin{align}
 S_{X_i}=&X_{a_i} X_{b_i} \prod_{j} X_{q_j}^{x_{ij}} \label{eq:SXi},\\
 S_{Z_i}=&Z_{a_i} Z_{b_i} X_{{a_i}'}X_{{b_i}'} \label{eq:SZi}\\
\text{and } S_i=&X_{s_i} X_{s_i'}\prod_j X_{q_j}^{(H_X)_{ij}}. \label{eq:Si}
\end{align}
This can be seen by multiplying the stabilizer generators centred on the qudits of the $X$ operators which are explicitly included in the above equations. Note that two operators
\begin{align}
A=&\prod_i  X_i^{a_{x,i}} Z_i^{a_{z,i}}\\
\text{and }B=&\prod_i  X_i^{b_{x,i}} Z_i^{b_{z,i}}
\end{align}
commute if and only if
\begin{equation}
\sum_i (a_{z,i} b_{x,i}- a_{x,i}b_{z,i}) \bmod D=0.
\end{equation}
The exponent of $Z_{s_j}$ in $S_{X_i}$ is
\begin{equation}
\sum_{k=1}^n x_{ik} (H_Z)_{jk}=0,
\end{equation}
because $\bar{X_i}$ and 
\begin{equation}
\tilde{s}_j=\prod_k X_k^{(H_X)_{jk}}  Z_k^{(H_Z)_{jk}}
\end{equation}
commute.
The matrix $\Gamma_S$ is chosen s.t. the exponent of $Z_{s_j}$ in $S_i$ is
\begin{equation}
-(\Gamma_S)_{ij} + \sum_k (H_X)_{ik} (H_Z)_{jk} = 0.
\end{equation}
While the main stabilizers given in Eqs.~(\ref{eq:SXi}) and (\ref{eq:SZi}) define the post-measurement state (i.e. one reads from these equations, that Bell pairs are produced), the operators defined in Eq.~(\ref{eq:Si}) are the $X$-chains used for error correction. By construction these operators correspond to the stabilizer generators of the error correction code we started with, see explanations following Definition~\ref{def:networkerrorcorrection}.
\end{proof}
\section{The tandem arrangement in the state formalism}\label{app:tandem}
Consider $G=(\{1,2,3\},\{\{1,2\},\{2,3\}\})$ (\raisebox{0.15\baselineskip}{\includegraphics{tandemidentitya}}) with edge weights $1$. $\ket{G}$ is stabilized by $X\otimes \1 \otimes X^{-1}$ and $Z \otimes X \otimes Z$. Now the node $2$ is measured in the $X$-basis. Let the outcome be $H\ket{x}$. We consider the state of the remaining system. Because $g_3=\1\otimes Z\otimes X$, $Z_2$ acts on this state exactly the same as $X_3$. After application of the by-product operator $X_3^{-x}$ we expect that it is stabilized by $X\otimes X^{-1}$ and $Z\otimes Z$. Up to a change of basis we produced a Graph state. A simple calculation in the state formalism confirms these considerations. Let $\omega=\eulere^{2\pi \ramuno/D}$.
\begin{align*}
 \ket{G}=& \left( \sum_j \ket{j}_1\bra{j}_1 \otimes Z_2^j\right) \left(\sum_k \ket{k}_2\bra{k}_2 \otimes Z_3^k\right)\\& \times \frac{1}{D\sqrt{D}} \sum_{i_1i_2i_3}\ket{i_1i_2i_3}\\
=& \frac{1}{D\sqrt{D}} \sum_{i_1i_2i_3} Z_2^{i_1} Z_3^{i_2}\ket{i_1i_2i_3}\\
=& \frac{1}{D\sqrt{D}} \sum_{i_1i_2i_3}\omega^{i_1i_2} \omega^{i_2i_3}\ket{i_1i_2i_3}\\
=& \frac{1}{D\sqrt{D}} \sum_{i_1i_2i_3}\omega^{i_2(i_1+i_3)} \ket{i_1i_2i_3}\\
=& \frac{1}{D\sqrt{D}} \sum_{i_1i_3} X_3^{i_1+i_3} \ket{i_1}_1\ket{-i_1}_3 Z_2^{i_1+i_3} \sum_{i_2}\ket{i_2}_2\\
\intertext{Now measure and obtain the $(x:=i_1+i_3)$-th eigenvalue and apply the by-product operator $X_3^{-x}$}
\ket{G'}=& \frac{1}{\sqrt{D}} \sum_{i_1} X_3^{-x} X_3^{x} \ket{i_1\,-i_1}\otimes H\ket{x}\\
        =& \frac{1}{\sqrt{D}} \sum_{i_1} \ket{i_1\,-i_1}\otimes H\ket{x}
\end{align*}
Trace out qudit 2 for clarity. The remaining state is stabilized by $Z_1 Z_3$ and $X_1 X_3^{-1}$, so it is, up to the basis transformation $H_3^\dagger$, 
the graph state associated with $G'=(\{1,3\},\{\{1,3\}\})$ (i.e. a Bell pair).
\end{document}